\documentclass[runningheads, envcountsame, a4paper]{llncs}

\usepackage{amssymb}





\usepackage[figuresright]{rotating}
\usepackage{subcaption}




\usepackage[utf8]{inputenc}
\usepackage{complexity}
\usepackage[algosection]{algorithm2e}
\usepackage{endnotes}
\usepackage{listings}
\usepackage{mathtools}
\usepackage{verbatim}
\usepackage{mathrsfs}
\usepackage{longtable}
\usepackage{enumitem}
\usepackage{etoolbox}
\usepackage{float}
\usepackage{color}
\usepackage{hyperref}

\usepackage{complexity}
\usepackage{comment}
\usepackage{longtable}

\usepackage{mathtools}

\newcommand*{\DEBUG}{}%

\ifdefined\DEBUG
\newcommand{\fixme}[1]{{\textcolor{red}{\bf{\textsf{FIXME: #1}}}}}
\newcommand{\bug}[1]{{\textcolor{blue}{\bf{\textsf{BUG: #1}}}}}
\newcommand{\idea}[1]{{\textcolor{blue}{\bf{\textsf{IDEA: #1}}}}}

\newcommand{\TODO}[1]{{\textcolor{red}{\bf{\textsf{ 
TODO: #1
}}}}}
\else
\newcommand{\fixme}[1]{}
\newcommand{\bug}[1]{}
\newcommand{\TODO}[1]{}
\newcommand{\idea}[1]{}
\fi

\newclass{\COMSLIP}{COM\mbox{-}SLIP}
\newclass{\COMSLIPCUP}{COM\mbox{-}SLIP^{\cup}}

\newclass{\DCM}{DCM}
\newclass{\eDCM}{eDCM}
\newclass{\eNPDA}{eNPDA}
\newclass{\DPDA}{DPDA}
\newclass{\RDPDA}{RDPDA}
\newclass{\PDA}{PDA}
\newclass{\DCMNE}{DCM_{NE}}
\newclass{\TwoDCM}{2DCM}
\newclass{\NCM}{NCM}
\newclass{\sNCM}{sNCM}
\newclass{\eNCM}{eNCM}
\newclass{\eNQA}{eNQA}
\newclass{\eNSA}{eNSA}
\newclass{\eNPCM}{eNPCM}
\newclass{\mNPCM}{mNPCM}
\newclass{\eNQCM}{eNQCM}
\newclass{\eNSCM}{eNSCM}
\newclass{\DPCM}{DPCM}
\newclass{\NPCM}{NPCM}
\newclass{\NQCM}{NQCM}
\newclass{\NSCM}{NSCM}
\newclass{\NPDA}{NPDA}
\newclass{\TRE}{TRE}
\newclass{\NFA}{NFA}
\newclass{\DFA}{DFA}
\newclass{\NCA}{NCA}
\newclass{\DCA}{DCA}
\newclass{\DTM}{DTM}
\newclass{\NTM}{NTM}
\newclass{\DLOG}{DLOG}
\newclass{\CFG}{CFG}
\newclass{\ULGC}{ULGC}
\newclass{\ULG}{ULG}
\newclass{\ETOL}{ET0L}
\newclass{\EDTOL}{EDT0L}
\newclass{\CFP}{CFP}
\newclass{\ORDER}{O}
\newclass{\MATRIX}{M}
\newclass{\BD}{BD}
\newclass{\LB}{LB}
\newclass{\ALL}{ALL}
\newclass{\decLBD}{decLBD}
\newclass{\StLB}{StLB}
\newclass{\SBD}{SBD}
\newclass{\TCA}{TCA}
\newclass{\RNCSA}{RNCSA}
\newclass{\RDCSA}{RDCSA}

\newclass{\DCSA}{DCSA}
\newclass{\NCSA}{NCSA}
\newclass{\DCSACM}{DCSACM}
\newclass{\NCSACM}{NCSACM}
\newclass{\NTMCM}{NTMCM}

\newclass{\SMG}{SMG}
\newclass{\RLSMG}{RLSMG}
\newclass{\LSMG}{LSMG}

\newclass{\NTPCM}{NTPCM}
\newclass{\NCSPCM}{NCSPCM}

\newclass{\PBCM}{PBCM}
\newclass{\NTPBCM}{NTPBCM}

\newclass{\NTCM}{NTCM}

\newclass{\UFIN}{\LL(IND_{UFIN})}
\newclass{\UFINONE}{\LL(IND_{{UFIN}_1})}
\newclass{\FIN}{\LL(IND_{FIN})}
\newclass{\ILIN}{\LL(IND_{LIN})}
\newclass{\ETOLfin}{\LL(ET0L_{FIN})}

\newclass{\PTIME}{PTIME}

\newclass{\GSM}{GSM}

\newsavebox{\spacebox}
\begin{lrbox}{\spacebox}
\verb*! !
\end{lrbox}
\newcommand{\blank}{\usebox{\spacebox}}%





\newcommand{\LL}{{\cal L}}
\newcommand{\MM}{{\cal M}}

\DeclareMathOperator{\yd}{yd}

\title{Techniques for Showing the Decidability of the Boundedness Problem of Language Acceptors
\thanks{The research of I.\ McQuillan was supported, in part, by Natural Sciences and Engineering Research Council of Canada Grant 2022-05092 (Ian McQuillan)
}}

\author{Oscar H. Ibarra\inst{1} \and Ian McQuillan\inst{2}}

\institute{Department of Computer Science\\ University of California, Santa Barbara, CA 93106, USA\\ \email{ibarra@cs.ucsb.edu}
\and Department of Computer Science, University of Saskatchewan\\
	Saskatoon, SK S7N 5A9, Canada\\
	\email{mcquillan@cs.usask.ca}
}

\authorrunning{O.H. Ibarra, and I. McQuillan }
\titlerunning{Containment Problem for Deterministic Multicounter Machine Models}

\tocauthor{Oscar~H.~Ibarra, Ian~McQuillan}

\begin{document}

\maketitle

\begin{abstract}
There are many types of automata and grammar models that have been studied in the literature, and for these models, it is common to determine whether certain problems are decidable.  One problem that has been difficult to answer throughout the history of automata and formal language theory is to decide whether a given system $M$ accepts a bounded language
(whether there exist words $w_1, \ldots,w_k$ such that $L(M) \subseteq w_1 \cdots w_k$?). Decidability of this problem has gone unanswered for the majority of automata/grammar models in the literature. 
Boundedness was only known to be decidable for regular and
context-free languages until recently when it was shown to also be decidable
for finite-automata and pushdown automata
augmented with reversal-bounded counters, and for vector addition systems with states. 

In this paper, we develop new techniques to show that the boundedness problem is decidable for larger classes of one-way nondeterministic automata and grammar models, by reducing the problem to the decidability of boundedness
for simpler classes of automata.
One technique involves characterizing the models
in terms of multi-tape automata.  We
give new characterizations of finite-turn Turing machines, 
finite-turn Turing machines augmented with various storage
structures (like a pushdown, multiple reversal-bounded counters,
partially-blind counters, etc.), and simple matrix grammars.
The characterizations are then used 
to show that the boundedness problem for these
models is decidable.
Another technique uses the concept of the store language of an automaton.
This is used to show that the boundedness problem is decidable for pushdown
automata that can ``flip'' their pushdown a bounded number of times, and 
boundedness remains decidable even if we augment this device with
additional stores.
\end{abstract}

\pagebreak

\section{Introduction}
\label{sec:intro}

There are many well-studied models of automata/grammars that are more powerful than
finite automata (denoted by $\NFA$) but less powerful than Turing machines. 
Perhaps the most
well-studied is the one-way nondeterministic pushdown automata ($\NPDA$) which accept the context-free languages. This model is
very practical --- for example, the {\em non-emptiness problem} (``given a machine $M$, is $L(M) \neq \emptyset$?''), as well as the {\em infiniteness problem} (``given a machine $M$, is $L(M)$ infinite?''), can both be determined in polynomial time for $\NPDA$ \cite{HU}.

Authors have studied models that are more powerful than $\NPDA$, such as $t$-flip $\NPDA$ (resp.\ finite-flip $\NPDA$), which have the ability to flip their pushdown stack at most $t$ (resp.\ a finite number of) times \cite{flipPushdown2,flipPushdown}. Non-emptiness and infiniteness are decidable for this model as well \cite{flipPushdown} (implied from their semilinear Parikh image).
Another more powerful model is simple matrix grammars, which are a class of grammars that generates a family of languages properly between the context-free languages and the matrix languages \cite{IbarraSMG}.

Other well-studied models with power between that of finite automata and Turing machines is the one-way nondeterministic reversal-bounded
multicounter machines \cite{Ibarra1978} ($\NCM$). This is an $\NFA$ with some number of counters, where each counter contains a non-negative integer, and transitions can detect if a counter is non-empty or not. The condition
of being $r$-reversal-bounded (resp.\ reversal-bounded) enforces that in each accepting computation, the
number of changes between non-decreasing and
non-increasing on each counter is at most $r$ (resp.\ a finite number).
It is also possible to combine different types of stores.
For example, another class of automata is $\NPDA$ augmented by
reversal-bounded counters, denoted by $\NPCM$.
This device, which is strictly more powerful than either $\NPDA$ or $\NCM$, 
has an $\NP$-complete non-emptiness problem \cite{HagueLin2011}. 

We will also consider nondeterministic Turing machines with a one-way read-only input tape and a single two-way read/write worktape, denoted by $\NTM$. While all of the problems above are undecidable for $\NTM$, a {\em $t$-turn} (resp.\ {\em finite-turn}) $\NTM$ are machines with at most $t$ (resp.\ some number of) changes in direction on the worktape in every accepting computation (called reversal-bounded in \cite{visitautomata}, but we call it finite-turn here). 
Again, the non-emptiness and infiniteness problems are
decidable for this model. 

Another important property beyond emptiness and infiniteness is that of boundedness. A language $L \subseteq \Sigma^*$ is {\em bounded} if there exist non-empty words $w_1, \ldots, w_k$ such that $L\subseteq w_1^* \cdots w_k^*$. 
Here, we explore further the important decision problem called
the {\em boundedness problem}: given machine $M$, is $L(M)$ a
bounded language?''.
In the early years of the study of formal language theory, this property was shown to be decidable for $\NFA$ and $\NPDA$ by Ginsburg and Spanier using a rather complicated procedure \cite{ginspan,GinsburgCFLs}. In contrast, if a class of machines with an undecidable emptiness problem 
accepts languages that are closed under concatenation with the language $\$ \Sigma^*$ (where $\$$ is a new symbol and $\Sigma$ is an at least two letter alphabet),
then the boundedness problem is also undecidable for the class, because $\$\Sigma^*$ concatenated with anything non-empty is not
bounded, and so $L \$\Sigma^*$ is bounded if and only if $L$ is empty. Until recently, the status of the boundedness problem had been elusive for essentially all other machine/grammar models (besides $\NPDA$) studied in the literature that have a decidable emptiness problem. Finally, Czerwinski, Hofman, and Zetzsche showed that the boundedness problem is decidable for vector addition systems with states \cite{boundedVASS} (equivalent to one-way partially blind multicounter machines \cite{G78}, denoted by $\PBCM$, that properly contain $\NCM$). With $\PBCM$, machines can add and subtract from counters but cannot detect whether counters are empty or not except that a machine crashes if a counter goes below zero, and a word is accepted if it hits a final state with all counters being zero. Also, in \cite{Georg}, it was determined that the boundedness problem for $\NPCM$ (and $\NCM$) is not only decidable, but also $\coNP$-complete.

%


Here, we develop techniques for showing that the
boundedness problem is decidable.
One technique involves creating characterizations in terms of multi-tape
versions of $\NFA$, $\NCM$, $\NPCM$, and $\PBCM$
of the following:
\begin{enumerate}
\item finite-turn $\NTM$ in terms of multi-tape $\NFA$,
\item finite-turn $\NTM$ augmented with reversal-bounded counters in terms of multi-tape $\NCM$,
\item finite-turn $\NTM$ augmented with a pushdown and reversal-bounded counters where in each accepting computation, the pushdown can only be changed 
during one sweep of the Turing tape, in terms of multi-tape $\NPCM$,
\item
finite-turn $\NTM$ augmented with partially blind counters in terms of multi-tape $\PBCM$.

\end{enumerate}
These characterizations are then used to show decidability of
the boundedness (also emptiness and infiniteness) problem for each of the models.
These results are strong as 
any combination of two 1-turn stores has an undecidable emptiness and thus boundedness problem.  In model (3) above, the restriction that the pushdown can only be used during one sweep (between two consecutive turns) of the read/write tape
cannot be dropped, as 
allowing one more sweep would make both emptiness and boundedness undecidable.
Note that the model in (3) is more powerful than $\NPDA$ and finite-turn $\NTM$, and can even accept non-indexed languages \cite{Aho}. For model (4), this model is strictly more powerful than 
the family of $\PBCM$ languages. Hence, it is the most powerful model containing non-semilinear languages with a known decidable boundedness problem. Using a similar technique, we show that the boundedness problem is decidable for simple matrix grammars (even when augmented with reversal-bounded counters).

Another technique involves the {\em store language} of a machine, which is the set of strings that encode the contents of the internal stores that can appear in any accepting computation.
There are some automata models in the literature where the family of store languages for that class can be accepted by a simpler type of automata. We use this to show that the boundedness problem
is decidable for finite-flip $\NPDA$. This is also true if augmented by reversal-bounded counters, and by a finite-turn worktape where the flip-pushdown is only used during one sweep of the worktape.
Hence, this is the most powerful model properly containing the context-free languages with a known decidable boundedness problem. 

All omitted proofs and some definitions are in the Appendix to help reviewers.

\section{Preliminaries and Notation}

We assume knowledge of introductory automata and formal language theory \cite{HU}, including deterministic and
nondeterministic finite automata, context-free grammars, pushdown automata, and Turing machines.

Let $\mathbb{N}$ be the set of positive integers and $\mathbb{N}_0$ the non-negative integers. Given a set $X$ and 
$t \in \mathbb{N}$, let $\langle X \rangle^t$ be the set of all $t$-tuples over $X$.
Given a finite alphabet $\Sigma$, let $\Sigma^*$ (resp.\ $\Sigma^+$) be the set of all words (resp.\ non-empty words) over $
\Sigma$. $\Sigma^*$ includes the empty word $\lambda$. A {\em language} $L$ is any subset of $\Sigma^*$, and a $t$-tuple
language $L$ is any subset of $\langle \Sigma^* \rangle^t$. Given a word $w$, the {\em reverse}
of $w$, denoted $w^R$ is equal to $\lambda$ if $w = \lambda$, and $a_n a_{n-1} \cdots a_1$ if $w = a_1 a_2 \cdots a_n, a_i \in \Sigma$ for $1 \le i \le n$. The
{\em length} of $w$, denoted by  $|w|$, is equal to the number of characters in $w$, and given $a\in \Sigma$, $|w|_a$ is the number of $a$'s in $w$.
Given alphabet $\Sigma = \{a_1, \ldots, a_m \}$ and $w \in \Sigma^*$, the {\em Parikh image} of $w$, $\psi(w) = (|w|_{a_1}, \ldots, |w|_{a_m})$; and the Parikh image of a language $L \subseteq \Sigma^*$ is $\psi(L) = \{\psi(w) \mid w\in L\}$.
Although we will not provide the formal definition of a language being semilinear, equivalently, a language is semilinear if and only if it has the same Parikh image as some regular language \cite{G78}.
Similarly, the Parikh image of $(w_1,\ldots, w_t) \in \langle \Sigma^* \rangle^t$, $\psi(w_1, \ldots, w_t) = \psi(w_1 \cdots w_t)$, and for 
$L \subseteq  \langle \Sigma^* \rangle^t, \psi(L) = \{\psi(x) \mid x \in L\}$.
A class of machines/grammars is said to be {\em effectively semilinear} if, given such a machine/grammar, a finite automaton with the same Parikh image can be effectively constructed.

A $t$-tape $\NFA$ over $\Sigma$ is a generalization of an $\NFA$ where there are $t$ input tapes and they take $(w_1,\ldots, w_t) \in \langle \Sigma^*\rangle^t$ as input, and each transition is of the form $q' \in \delta(q,a,i)$, where the machine switches from state $q$ to $q'$ and reads $a \in \Sigma \cup \{\lambda\}$ from input tape $i$. This allows such a machine to accept a $t$-tuple language. The formal definition appears in the Appendix.

We will augment $t$-tape $\NFA$s with additional stores;
e.g.\ a $t$-tape $\NPDA$ is a $t$-tape $\NFA$ with an additional pushdown
alphabet $\Gamma$, and $\delta$ becomes a partial function with rules of the form $(q', \gamma) \in 
\delta(q,a,i, X)$, where $q,q',a,i$ are as with $t$-tape $\NFA$s, 
$X \in \Gamma$ is the topmost symbol of the pushdown which gets replaced by the word $\gamma \in \Gamma^*$.
Configurations now include a third component which contains the current pushdown contents, as is standard for pushdown automata \cite{HU}. We can similarly define machines with multiple stores by defining
the transitions to only read and change one store at a time. 
Standard one-way single-tape acceptors are a special case with only one input tape.
Multi-tape inputs 
have been studied for $\NPDA$ \cite{multitapeNPDA}, $\NCM$ \cite{Ibarra1978}, and $\NPCM$ \cite{IbarraMcQuillanVerification}.

We will use all machine models with a one-way read-only input, including those described in Section \ref{sec:intro}.
Almost all stores we consider are formally defined in \cite{StoreLanguages}, and we omit
the formal definitions due to space constraints (see Appendix). 

For all multi-tape machine models, they are effectively semilinear if and only if $1$-tape machines are effectively semilinear, because given a $t$-tape $M$, there is a $1$-tape machine $M'$ that reads $a$ from the tape whenever it can read $a$ from any tape of $M$; hence $\psi(L(M)) = \psi(L(M'))$.

It is known that for any type of nondeterministic machine model with reversal-bounded counters, one can equivalently use
monotonic counters \cite{IbarraGrammars} instead of reversal-bounded counters. 
Such machines have an even number $k$
of counters that we identify by $C_1, D_1, \ldots, C_{k/2}, D_{k/2}$ that can only be
incremented but not decremented, transitions do not detect the counter status, 
and acceptance occurs when the machine 
enters an final state with counters $C_i$ and $D_i$
having the same value for each $i$.  
Due to the equivalence, we will use the same notation as above
($\NPCM$, etc.) to mean machines with monotonic counters.
Monotonic counters are helpful in this paper because if we simulate 
an accepting computation of a machine with another machine that applies the same changes but in a different order, then the resulting simulation will still have matching monotonic counters.

\section{Boundedness Using Multi-Tape Characterizations}

\subsection{Characterizations of Finite-Turn Turing Machines}
\label{sec:finiteturn}

We first look at finite-turn $\NTM$, and finite-turn $\NTM$ with reversal-bounded counters (denoted by finite-turn $\NTCM$). These machines have previously been studied both without counters \cite{visitautomata}
and with counters \cite{Harju2002278}.
We give characterizations of these machines in terms of multi-tape $\NFA$ 
and  multi-tape $\NCM$.

\begin{example} \label{noncontextfreewithcounters}
Consider $L = \{w \# w \$ v \#v \mid w,v \in \{a,b\}^*, |w|_a = |v|_a, |w|_b = |v|_b\}$.
$L$ can be accepted by a $4$-turn $\NTCM$ $M$ with four monotonic counters as follows:
on input  $w_1 \# w_2 \$ v_1 \# v_2$, $M$ reads $w_1$ and writes it to the tape while in parallel recording $|w|_a$ and $|w|_b$ in two monotonic counters $C_1$ and $C_2$. When it hits $\#$, it turns and goes to the left end of the tape, and verifies $w_2 = w_1$. When it hits $\$$, it does the same procedure with the read/write tape to the right to verify $v_1 = v_2$, while in parallel putting $|v_1|_a$ and $|v_2|_b$ on two monotonic counters $D_1$ and $D_2$. It then accepts if the contents of $C_1$ equals $D_1$ and $C_2$ equals $D_2$. 
Although we do not have a proof, we conjecture this cannot be accepted by an $\NPCM$.
\end{example}

A $t$-turn $\NTM$ $M$ is in state normal form if $M$ makes exactly $t$ turns on all inputs accepted, the read/write head always moves to the right or left on every move, and in every accepting computation, it always turns to the left (resp.\ the right) on the same cell where it writes the current state on the tape. It can be shown (see Appendix) that any $t$-turn $\NTM$ can be transformed into another in state normal form that accepts the same language.
For such a $t$-turn $\NTM$ $M$ in state normal form,  define a $t+1$ tuple of symbols $b = (b_1, \ldots, b_{t+1})$ where each $b_i$ is in
the worktape alphabet $\Gamma$.
Let $\Delta$ be the alphabet of these symbols. We can think of a word in $\Delta^*$ as representing a $t+1$ track worktape,
where the $i$th component is the $i$th track.
For $1 \leq i \leq t+1$, define a homomorphism $h_i$ from $\Delta^*$ to $\Gamma^*$ such that $h_i((b_1, \ldots, b_{t+1})) = b_i$. 
Given a $t$-turn $\NTM$ $M$ in state normal form, define the {\em history language} $H(M)$ over $\Delta^*$ as follows:
$H(M)$ contains all strings $x$ where there is an accepting computation of $M$ such that $h_i(x)$ is the string
on the worktape after the $i$th sweep of the worktape and $h_{t+1}(k)$ is the string on the worktape at the end of the computation after it has made the final sweep after the last turn. This means if $t$ is even (it is similar if odd) 
\begin{equation*}h_1(x) = q_0 x_1 q_1, h_2(x) = q_2 x_2 q_1, \ldots , h_{t}(x) = q_t x_t q_{t-1}, h_{t+1}(x) = q_t x_{t+1} q_{t+1},\end{equation*} where $q_0$ is the initial state, $M$ writes $q_0 x_1 q_1$ on the first
sweep, etc. until $q_t x q_{t+1}$, which is the final worktape contents, and $q_{t+1}$ is a final state.

Let $t \ge 1$.  For a $t$-tuple $(w_1, \ldots, w_t)$, $w_i \in \Sigma^*$, let its {\em alternating pattern} be:
\begin{equation*}(w_1, \ldots, w_t)^A = \begin{cases}
w_1 w_2^R \cdots w_{t-1}w_t^R & \mbox{if $t$ is even,}\\
w_1 w_2^R \cdots w_{t-1}^Rw_t  & \mbox{if $t$ is odd.}\\
\end{cases}\end{equation*}
If there is a $t \ge 1$ with $L \subseteq \langle \Sigma^* \rangle^t$,
let $L^A = \{(w_1, \ldots, w_t)^A ~|~ (w_1, \ldots, w_t) \in L \}$.
We now show that every $t$-turn $\NTM$ can be ``converted'' to a
$(t+1)$-tape $\NFA$.
Starting with $M$ in state normal form, $M'$ guesses a $(t+1)$-track string $x \in \Delta^*$ letter-by-letter from left-to-right 
while checking in parallel that the input on tape $i$ would be read by the simulated moves on track $i$ thereby verifying that $x \in H(M)$.
\begin{lemma} \label{NTMtoNFA}
Let $t \ge 0$, and $M$ be a $t$-turn $\NTM$  (resp.\ $t$-turn $\NTCM$). We can construct a $(t+1)$-tape $\NFA$  (resp.\ $(t+1)$-tape $\NCM$) $M'$ such that
$L(M' )^A =  L(M).$
\end{lemma}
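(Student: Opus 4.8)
The plan is to realize the single left-to-right scan of the worktape history $x \in \Delta^*$ by a $(t+1)$-tape machine $M'$ that \emph{generates} $x$ one letter at a time while simultaneously consuming its $t+1$ input tapes, and to show that the alternating pattern $(\cdot)^A$ exactly compensates for the alternating directions of $M$'s sweeps. First I would put $M$ into state normal form (using the transformation cited just before the lemma), so that $M$ makes exactly $t$ turns, hence $t+1$ sweeps of the worktape, with odd-numbered sweeps running left-to-right and even-numbered sweeps right-to-left, and with the turn states recorded on the tape at the two ends of the worktape region. Write the input actually consumed by $M$ as $u = w_1 w_2 \cdots w_{t+1}$, where $w_i$ is the portion read during sweep $i$ (the input head is one-way, so these segments are consecutive). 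The target correspondence is that tape $i$ of $M'$ holds $w_i$ when $i$ is odd and $w_i^R$ when $i$ is even; since $(\cdot)^A$ reverses exactly the even coordinates, this yields $(v_1,\dots,v_{t+1})^A = w_1 \cdots w_{t+1} = u$, which is the required identity $L(M')^A = L(M)$.

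Next I would specify $M'$'s behaviour at one worktape cell. Each letter of $x$ is a $(t+1)$-tuple whose $i$-th component is the symbol written during sweep $i$ at that cell, and whose $(i-1)$-th component is what sweep $i$ reads there. In its finite control $M'$ carries, for every sweep at once, the simulated state of $M$ at the current cell, and at each letter it checks (using a short sequence of moves so that each transition touches only one input tape, as the model requires) that the recorded step of $M$ in each sweep is legal: the state together with the track-$(i-1)$ symbol yields the track-$i$ symbol, the correct head move, and a next state, and whenever sweep $i$ reads an input symbol, $M'$ reads that symbol from tape $i$. For odd sweeps the information flows in the scan direction and the state is threaded forward. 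For even sweeps the information flows \emph{opposite} to the scan: here $M'$ nondeterministically guesses the state entering the current cell from the right, uses it to perform the step, and verifies the guess one cell later by checking that the left-outgoing state it computes at the next cell equals the state guessed at the previous cell. This bookkeeping is finite, and the turn/boundary states are checked against the state symbols recorded at the ends of the tracks, with acceptance requiring the final sweep to end in a final state of $M$.

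The reversal appears precisely from reading order. During an even (right-to-left) sweep, $M$ reads the symbols of $w_i$ in decreasing cell order, whereas $M'$ encounters those same cells in increasing order during its single left-to-right scan; hence $M'$ reads $w_i$ in reverse, so tape $i$ must carry $w_i^R$, while for odd sweeps the orders agree and tape $i$ carries $w_i$. For the counter version, $M'$ applies to its own counters exactly the increments that the simulated transitions of $M$ apply; because the counters are monotonic, the fact that $M'$ performs these increments in a reordered sequence (cell order rather than time order) does not change whether the final matching condition holds, so the identity still holds and $M'$ is a $(t+1)$-tape $\NCM$. Correctness follows from two inclusions: an accepting computation of $M$ yields a correct guess of $x$ and of the even-sweep state sequences, so the corresponding tuple is accepted by $M'$; conversely, an accepting run of $M'$ has passed all local and boundary checks, so its guessed $x$ encodes a genuine accepting computation of $M$ on $u$. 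I expect the main obstacle to be the backward state flow of the even sweeps, resolved by the guess-and-verify scheme above, together with confirming (via monotonicity) that the reordering of counter operations is harmless.
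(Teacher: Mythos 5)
Your proposal is correct and follows essentially the same route as the paper's proof: put $M$ in state normal form, guess the multi-track history string letter-by-letter in one left-to-right pass while simulating all $t+1$ sweeps in parallel (one per track, reading $w_i$ from tape $i$ for odd $i$ and $w_i^R$ for even $i$), handle the right-to-left sweeps by running their transitions backwards against the scan direction, and simulate monotonic counters verbatim since reordering their increments is harmless. Your guess-and-verify bookkeeping for the backward state flow is just a local rephrasing of the paper's ``flipping the directions'' threading, so the two arguments coincide in substance.
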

For the opposite direction, on the first sweep of the worktape, $M'$  guesses and writes a guessed sequence of transition labels of $M$, and then sweeps the worktape once for each tape $i$ to make sure the next section of the input word of $M'$ would be read by tape $i$ in the simulation.
\begin{lemma} 
Let $t \geq 0$, and let $M$ be a $(t+1)$-tape $\NFA$ (resp.\ $(t+1)$-tape $\NCM$). Then we can construct a $t$-turn $\NTM$  (resp.\ $t$-turn $\NTCM$) $M'$ such that
$L(M' ) = L(M)^A.$
\label{NFAtoNTM}
\end{lemma}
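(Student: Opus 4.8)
The plan is to build the $t$-turn $\NTM$ (resp.\ $\NTCM$) $M'$ so that it reverses the construction of Lemma~\ref{NTMtoNFA}: instead of guessing a single track string, $M'$ guesses an \emph{entire accepting computation of $M$} as a sequence of transitions, stores it once on the worktape, and then re-scans that stored computation once per input tape of $M$ to check the tape contents against the input. Writing $u = (w_1,\ldots,w_{t+1})^A$ for the input, recall that $u$ decomposes as $u = u_1 u_2 \cdots u_{t+1}$ with $u_i = w_i$ when $i$ is odd and $u_i = w_i^R$ when $i$ is even. Since $M'$ reads its input one-way while the worktape sweeps back and forth, the $i$th input segment $u_i$ will be consumed exactly during the $i$th sweep of the worktape, and the parity of that sweep is what accounts for the reversal in the alternating pattern.

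First I would have $M'$, on its first (left-to-right) sweep, nondeterministically guess a sequence of transitions $\tau_1,\ldots,\tau_m$ of $M$ and write their labels on consecutive worktape cells, one per cell. While doing so, $M'$ keeps the simulated state of $M$ in its finite control, verifying that $\tau_1$ departs from the initial state, that the target state of each $\tau_j$ equals the source state of $\tau_{j+1}$, and that $\tau_m$ ends in an accepting state; thus $\tau_1\cdots\tau_m$ is checked to be a valid accepting computation of $M$. In the $\NCM$ case $M'$ also applies, on this same sweep, every counter increment prescribed by the $\tau_j$; because the counters are monotonic this may be done in computation order without affecting the totals, so the acceptance condition $C_i = D_i$ of $M$ is enforced by the acceptance condition of $M'$. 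Simultaneously, whenever $M'$ writes a transition $\tau_j$ reading from tape $1$ of $M$, it advances its own input head by one symbol and checks agreement with the symbol $\tau_j$ reads; the tape-$1$ reads, taken in computation order, spell $w_1$, matching $u_1 = w_1$.

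Next, for each $i$ with $2 \le i \le t+1$, the $i$th sweep re-scans the stored computation, traversing the worktape right-to-left (so the $\tau_j$ are revisited in reverse order $\tau_m,\ldots,\tau_1$) when $i$ is even and left-to-right (order $\tau_1,\ldots,\tau_m$) when $i$ is odd. During sweep $i$, $M'$ ignores every transition not reading from tape $i$, and for each transition that does, consumes the next input symbol of segment $u_i$ and checks agreement. When $i$ is odd the tape-$i$ reads are revisited in order, spelling $w_i = u_i$; when $i$ is even they are revisited in reverse, spelling $w_i^R = u_i$. The total number of sweeps is $t+1$, giving exactly $t$ turns, and $M'$ accepts iff it has consumed all of $u$, the guessed computation was accepting, and (in the $\NCM$ case) the counter condition holds. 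Since different guesses induce different segmentations of $u$, and a guess succeeds precisely when it is an accepting computation of $M$ whose tape contents, assembled per the alternating pattern, equal $u$, we obtain $L(M') = L(M)^A$.

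The main obstacle is bookkeeping rather than conceptual: one must ensure that the one-way input head consumes the segments $u_1,\ldots,u_{t+1}$ in lockstep with the sweeps, so that no symbol of $u$ is read on the wrong sweep, and that the reversal induced by the right-to-left even sweeps matches exactly the reversals $w_i \mapsto w_i^R$ in the definition of the alternating pattern. A secondary point is that storing the whole computation on the worktape is essential (each later sweep must re-examine the same transitions for a different tape), and that the worktape head advances one cell per written transition during the first sweep independently of whether the input head advances. For the present statement it suffices that this construction already yields a machine making exactly $t$ turns on every accepted input; conversion to state normal form, if needed later, is then routine.
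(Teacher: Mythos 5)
Your proposal is correct and follows essentially the same approach as the paper's own proof: guess and write the accepting transition sequence of $M$ on the first sweep while verifying chaining of states and matching tape-1 reads (and applying all monotonic counter updates there), then make one additional sweep per remaining tape, using the sweep's direction to account for the reversals in the alternating pattern. The bookkeeping points you raise (one-way input consumed in lockstep with sweeps, parity handling the reversal) are exactly the considerations the paper's construction relies on.
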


From the two lemmas above, we obtain:
\begin{proposition} \label{char}
Let $t \geq 0$.
There is a $(t+1)$-tape $\NFA$ (resp.\ $(t+1)$-tape $\NCM$) $M$ if and only if there is a $t$-turn $\NTM$  (resp.\ $t$-turn $\NTCM$) $M'$ such that
$L(M' ) = L(M)^A$.
\end{proposition}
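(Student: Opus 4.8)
The plan is to read this proposition as an equality of two language classes and to observe that it is precisely the conjunction of the two preceding lemmas. Write $\mathcal{A} = \{ L(M)^A \mid M \text{ is a } (t+1)\text{-tape } \NFA \}$ and $\mathcal{B} = \{ L(M') \mid M' \text{ is a } t\text{-turn } \NTM \}$; the statement to prove is that a language lies in $\mathcal{A}$ if and only if it lies in $\mathcal{B}$, and similarly with $\NCM$ in place of $\NFA$ and $\NTCM$ in place of $\NTM$. Since Lemmas \ref{NTMtoNFA} and \ref{NFAtoNTM} already supply effective constructions in both directions, essentially no new work remains; the proof is a short combination of the two.

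For the direction asserting that a $(t+1)$-tape $\NFA$ $M$ yields a $t$-turn $\NTM$ $M'$ with $L(M') = L(M)^A$, I would simply invoke Lemma \ref{NFAtoNTM}, which given such an $M$ constructs exactly such an $M'$. For the converse, given a $t$-turn $\NTM$ $M'$ I would invoke Lemma \ref{NTMtoNFA}, which constructs a $(t+1)$-tape $\NFA$ $M$ with $L(M)^A = L(M')$; this witnesses that $L(M') \in \mathcal{A}$. Combining the two inclusions gives $\mathcal{A} = \mathcal{B}$. The counter version is handled verbatim by the \emph{resp.} clauses of the same two lemmas, replacing $\NFA$ by $\NCM$ and $\NTM$ by $\NTCM$ throughout.

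The only points worth stating explicitly, rather than genuine obstacles, are the following. First, one should confirm that the two lemmas use the same alternating-pattern operation $(\cdot)^A$ and the same parity convention for $t$, so that the languages match exactly and not merely up to a reversal of some track; both lemmas are phrased with the identical expression $L(M)^A$, so this is immediate. Second, in the counter case one should note that carrying the monotonic counters through both constructions is legitimate because each construction only reorders when the counter updates are applied and never changes their multiset of increments; as remarked earlier in the paper, matching monotonic counters are preserved under such reordering, so acceptance is unaffected. The hard part, if any, is thus entirely contained in the two lemmas themselves; the proposition follows as their immediate corollary.
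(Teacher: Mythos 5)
Your proposal is correct and matches the paper's own proof exactly: the paper derives Proposition \ref{char} directly as the conjunction of Lemmas \ref{NTMtoNFA} and \ref{NFAtoNTM} (``From the two lemmas above, we obtain''), which is precisely your two-directional invocation of those lemmas, including the observation that the \emph{resp.} counter clauses carry through verbatim.
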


Further to the definition of bounded languages, we
say $L \subseteq \langle \Sigma^* \rangle^t$ is  a {\em bounded $t$-tuple language}
if $L \subseteq B_1 \times \cdots \times B_t$, where each $B_i$ is of the form 
$w_1^* \cdots w_n^*$ for some $w_1, \ldots, w_n \in \Sigma^+$. Given $L\subseteq \langle \Sigma^* \rangle^t$,
let $L^{(i)} = \{w_i \mid (w_1, \ldots, w_t) \in L\}$.

Let $\MM$ be a class of multi-tape machines consisting of an $\NFA$ with zero or more stores. Given a
$t$-tape $M \in \MM$, for each $i$, $1 \le i \le t$, let $M_i$ be the one tape machine in $\MM$ that simulates moves that
 read from tape $i$ by reading from the input tape, but reads $\lambda$ to simulate a read from other tapes.
So, $L(M_i) = L^{(i)}$ for all $i$, $1 \leq i \leq t$.
The following is easily verified:
\begin{lemma}
\label{ktapeto1}
A $t$-tape $M \in \MM$ is a bounded (resp.\ non-empty, finite) $t$-tuple language if and only if $L(M_i)$ is a bounded (resp.\ non-empty, finite) language for each  $1 \leq i \leq t$.
\end{lemma}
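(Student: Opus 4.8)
The plan is to reduce all three equivalences to a single structural observation. Writing $L = L(M) \subseteq \langle \Sigma^* \rangle^t$, the construction gives $L(M_i) = L^{(i)}$, which is exactly the image of $L$ under the $i$th projection. The key fact I would record first is the two-sided containment
\begin{equation*}
L \subseteq L^{(1)} \times \cdots \times L^{(t)},
\end{equation*}
together with the observation that each $L^{(i)}$ loses no information, being precisely the projection of $L$. Every direction of the lemma then follows because projection and Cartesian product each preserve one of the three properties in the appropriate direction. Since $L(M_i) = L^{(i)}$, each claim about $L(M_i)$ is just a claim about $L^{(i)}$, so it suffices to argue at the level of $L$ and its projections.

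For non-emptiness, I would note that if $L \neq \emptyset$ then any $(w_1,\ldots,w_t) \in L$ witnesses $w_i \in L^{(i)}$ for every $i$, so all projections are non-empty; conversely, if $L = \emptyset$ then every $L^{(i)}$ is empty, so the projections are non-empty exactly when $L$ is. For finiteness, the forward direction is immediate since the projection of a finite set is finite; for the converse, if each $L^{(i)}$ is finite then $L^{(1)} \times \cdots \times L^{(t)}$ is a finite set of tuples, and the containment above forces $L$ to be finite.

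The case of real interest is boundedness. If $L$ is a bounded $t$-tuple language, then $L \subseteq B_1 \times \cdots \times B_t$ with each $B_i$ of the form $w_1^* \cdots w_n^*$, and projecting onto coordinate $i$ gives $L^{(i)} \subseteq B_i$, so each $L^{(i)}$ is bounded. For the converse, I would choose for each $i$ a bounding set $B_i = w_1^* \cdots w_{n_i}^*$ with $L^{(i)} \subseteq B_i$, and then invoke the containment to obtain $L \subseteq L^{(1)} \times \cdots \times L^{(t)} \subseteq B_1 \times \cdots \times B_t$, which is verbatim the definition of $L$ being a bounded $t$-tuple language. I do not expect a genuine obstacle, as the lemma is purely structural; the only point requiring a moment's care is in this last converse, where one must produce a witness of exactly the product shape $B_1 \times \cdots \times B_t$ demanded by the definition, rather than some weaker bounded set, so that the conclusion matches the stated notion of a bounded $t$-tuple language.
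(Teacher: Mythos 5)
Your proposal is correct and follows essentially the same route as the paper's proof: project the bounding sets $B_i$ for the forward direction, and use the containment $L \subseteq L^{(1)} \times \cdots \times L^{(t)} \subseteq B_1 \times \cdots \times B_t$ for the converse, with non-emptiness and finiteness handled by the same elementary projection/product observations the paper dismisses as clear. Your writeup merely makes explicit the containment in the product of projections, which the paper uses implicitly.
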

\begin{proof}
The proofs for non-emptiness and finiteness are clear. 

Assume $L(M)$ is a bounded $t$-tuple language, and therefore there exists $B_1, \ldots, B_t$, where each $B_i$ is of the form
$w_1^* \cdots w_n^*$ and $L(M) \subseteq B_1 \times \cdots \times B_t$. Thus, for each $i$, $L(M_i) \subseteq B_i$, and is bounded.

Assume each $L(M_i)$ is bounded, and let $B_i$ be such that
$L(M_{(i)}) \subseteq B_i$ and $B_i$ is of the form $w_1^* \cdots w_n^*$. Then, $L(M) \subseteq B_1 \times \cdots \times B_t$.
\qed \end{proof}

Using this characterization, we can show the following.
\begin{proposition} \label{boundedNTM}
The boundedness, non-emptiness, and infiniteness problems for finite-turn $\NTM$ (resp.\ finite-turn $\NTCM$) are decidable, and they are effectively semilinear.
\end{proposition}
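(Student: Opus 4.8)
The plan is to reduce the boundedness (and non-emptiness, infiniteness, semilinearity) problem for finite-turn $\NTM$ and finite-turn $\NTCM$ to the corresponding problem for multi-tape $\NFA$ and multi-tape $\NCM$, using the characterization of Proposition~\ref{char} together with Lemma~\ref{ktapeto1}. Given a finite-turn $\NTM$ (resp.\ $\NTCM$) $M'$, I first observe that being $t$-turn for \emph{some} $t$ means we can bound $t$ effectively (the number of turns is part of the machine's specification, or can be extracted since finite-turn means there is a fixed $t$). By Proposition~\ref{char} there is a $(t+1)$-tape $\NFA$ (resp.\ $(t+1)$-tape $\NCM$) $M$ with $L(M')=L(M)^A$, and this $M$ is effectively constructible by Lemma~\ref{NFAtoNTM}'s converse direction (Lemma~\ref{NTMtoNFA}).

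The key observation driving the reduction is that the alternating-pattern map is ``boundedness-preserving'' in both directions. First I would argue that $L(M')=L(M)^A$ is a bounded language (over $\Sigma^*$) if and only if $L(M)$ is a bounded $t{+}1$-tuple language. For the forward direction, each component projection satisfies $L(M)^{(i)} = L(M')^{(i)}$ up to reversal on the even coordinates, and if $L(M')\subseteq w_1^*\cdots w_n^*$ then each coordinate word $w_i$ (or its reverse $w_i^R$) is an infix-closed piece of a word in a fixed bounded set, so each $L(M)^{(i)}$ is itself bounded (a subset of a bounded language is bounded, and reversal of a bounded language is bounded since $(w_1^*\cdots w_n^*)^R = (w_n^R)^*\cdots (w_1^R)^*$). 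Conversely, if each $L(M)^{(i)}$ is bounded then by Lemma~\ref{ktapeto1} $L(M)$ is a bounded $(t+1)$-tuple language contained in $B_1\times\cdots\times B_{t+1}$, and then $L(M)^A = L(M')$ is contained in $B_1 B_2^R \cdots$, which is again a product of starred words and hence bounded. Thus $L(M')$ is bounded iff $L(M)$ is a bounded tuple language iff (by Lemma~\ref{ktapeto1}) each one-tape component machine $M_i$ accepts a bounded language.

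It then remains to invoke the known decidability results for the one-tape component machines: each $M_i$ is an ordinary ($1$-tape) $\NFA$ (resp.\ $\NCM$), for which boundedness, non-emptiness, and infiniteness are already decidable (for $\NFA$ by \cite{ginspan}, and for $\NCM$ by \cite{Georg}, where boundedness is even $\coNP$-complete), and which are effectively semilinear. Running the decision procedure on each of the finitely many $M_i$ and combining the answers via Lemma~\ref{ktapeto1} decides each property for $L(M)$, and hence for $L(M')=L(M)^A$. Effective semilinearity of $M'$ follows because $\psi(L(M')) = \psi(L(M))$ (reversal and interleaving of the tracks does not change the Parikh image), and $L(M)$ is effectively semilinear as a multi-tape $\NCM$; concretely one builds an $\NFA$ with the same Parikh image from the finitely many component machines.

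I expect the main obstacle to be the careful verification that the alternating-pattern construction preserves boundedness in both directions, in particular handling the reversals on even coordinates and confirming that a subword/infix of an element of a bounded language $w_1^*\cdots w_n^*$ lies in a (possibly different, but still effectively obtainable) bounded language so that the per-coordinate projections are genuinely bounded. The non-emptiness and infiniteness parts are routine since $L(M')$ is nonempty (resp.\ infinite) exactly when $L(M)$ is, which by Lemma~\ref{ktapeto1} reduces to the component machines; the boundedness equivalence is the delicate step, but once established the result follows immediately from the decidability and semilinearity of $\NFA$ and $\NCM$.
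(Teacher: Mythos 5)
Your proposal is correct and takes essentially the same route as the paper's own proof: convert via Proposition \ref{char} to a $(t+1)$-tape $\NFA$ (resp.\ $\NCM$), show $L(M)^A$ is bounded if and only if every projection $L(M_i)$ is bounded (using closure of bounded languages under subwords, reversal, and subsets in one direction, and Lemma \ref{ktapeto1} together with closure of boundedness under concatenation in the other), then appeal to decidability of the three properties for $1$-tape $\NFA$/$\NCM$ and to $\psi(L(M'))=\psi(L(M))$ for effective semilinearity. The only imprecision is your closing parenthetical: an $\NFA$ with the same Parikh image cannot be built \emph{from the component machines} $M_i$, since projections lose inter-tape correlations (e.g., $\{(a^n,b^n) \mid n \ge 0\}$ has the same projections as $a^* \times b^*$ but a different Parikh image); instead, as in the paper's preliminaries, one merges the tapes into a single $1$-tape machine that reads from its one tape whenever $M$ reads from any tape --- with that construction, your main justification (effective semilinearity of multi-tape $\NCM$) goes through.
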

\begin{proof}
From Proposition \ref{char}, given a $t$-turn $\NTM$ $M'$,
there is a $t+1$-tape $\NFA$ $M$ with $L(M)^A = L(M')$. We will decide if $L(M)^A$ is bounded; indeed, 
we will show $L(M)^A$ is bounded if and only if, for each $i$, $1 \le i \le t+1$, $L(M_i)$ is bounded.

Assume $L(M)^A = L(M')$ is bounded. Assume $t$ is odd (with the even case being similar). Then $L(M)^A =  \{w_1 w_2^R \cdots w_t w_{t+1}^R \mid (w_1, \ldots, w_{t+1}) \in L(M)\}$ is bounded. It is known that given any bounded language $L$, the reverse of $L$ is bounded, the set of subwords of $L$ is bounded, and any subset of $L$ is bounded \cite{GinsburgCFLs}. Hence, for each $i$, $\{w_i \mid (w_1, \ldots, w_{t+1}) \in L(M')\}$ is bounded as, for $i$ odd, then this is a subset of the set of subwords of $L(M)^A$, and for $i$ even, it is the reverse. This set is $L(M_i)$, and so each $L(M_i)$ is bounded.

Conversely, assume each $L(M_i)$ is bounded for $1 \le i \le t+1$, and hence $L(M_i)^R$ is also bounded for $i$ even. By Lemma \ref{ktapeto1}, $L(M)$ is a bounded $t+1$-tuple language. Since the finite concatenation of bounded languages is bounded \cite{GinsburgCFLs}, $L(M)^A$ is bounded.

Hence, $L(M')$ is bounded if and only if, for each $i$, $1 \le i \le t+1$, $L(M_i)$ is bounded. Since these are each regular, we can decide this property.

The proof is the same for finite-turn $\NTCM$ using decidability of boundedness for $\NCM$ \cite{Georg,boundedVASS}.

Semilinearity follows from semilinearity of $\NFA$ and $\NCM$ \cite{Ibarra1978}, and since $\psi(L(M)) = \psi(L(M'))$.
\qed \end{proof}
Although decidability for non-emptiness, finiteness, and effective semilinearity were known for both finite-turn $\NTM$ and finite-turn $\NTCM$ \cite{Harju2002278}, to our knowledge, decidability of boundedness for both finite-turn $\NTM$ and for finite-turn $\NTCM$ were not previously known.

\subsection{Finite-Turn $\NTM$ with Pushdown and Counters}
\label{sec:NTMwithPushdown}

In this section, we provide a further generalized model by augmenting
finite-turn $\NTM$ with not only monotonic counters but also a pushdown
where the pushdown can only be used in a restricted manner:

A $t$-turn $\NTM$ augmented with monotonic counters
and a pushdown is called a
$t$-turn $\NTPCM$. Such a machine is called {\em $i$-pd-restricted} if, during every accepting
computation, the pushdown is only
used  the $i$th sweep (either left-to-right or right-to-left) of the worktape; and a machine is {\em pd-restricted} if in every accepting
computation, the pushdown is only used in a single pass (it can be different passes depending on the computation).

\begin{example}
Let $D_1$ be the 
language over the alphabet $\{a_1,b_1\}$ generated by the context-free grammar with productions $S\rightarrow a_1 S b_1 S$ and $S \rightarrow \lambda$. This language is known as the
 Dyck language
over one set of parentheses, and let 
$L = \{x \# x \# x ~|~  x \in D_1 \}$.
$L$ can be accepted by a pd-restricted 4-turn $\NTPCM$ machine
(even without counters). Indeed, on input
$x_1 \# x_2 \# x_3$, a machine $M$ can use the pushdown to verify $x_1 \in D_1$ while in parallel copying $x_1$ to the read/write tape. Since this is the only pass where the pushdown is used, the machine is $1$-pd-restricted. Then it can match $x_1$ against the input to verify $x_1 = x_2=x_3$.
It follows from \cite{EngelfrietCopying} and \cite{Rozoy} that $L$ is not even an indexed language, a family that strictly contains the context-free languages \cite{Aho}, and is equal to the family of languages accepted by automata with a ``pushdown of pushdowns'' \cite{IteratedStack}. 
Thus, pd-restricted $\NTPCM$ is quite a powerful model, containing all of $\NPDA$, finite-turn $\NTM$, and even some non-indexed languages.
\end{example}

The characterization will use a restriction of multi-tape $\NPCM$ as follows. Let $i$ satisfy $1 \le i \le t$.
A $t$-tape $\NPCM$ is {\em $i$-pd-restricted} if, for every accepting computation,
the pushdown is only used when it reads from a single input tape. 

The following is easy to verify. Any transition $\alpha$ that reads $a \in \Sigma \cup \{\lambda\}$ from input tape $j \neq i$ and uses the pushdown can be simulated by
first reading $a$ from tape $j$ but not changing another store, then in the next transition, reading and changing the pushdown as in $\alpha$ while reading $\lambda$ from tape $i$.
\begin{lemma} \label {rest}
Every $t$-tape $\NPCM$ can be converted to an equivalent
$i$-pd-restricted $t$-tape $\NPCM$, for any $1 \le i \le t$.  
\end{lemma}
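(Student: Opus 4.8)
The plan is to prove Lemma~\ref{rest} by a direct normalization of the transition function of a given $t$-tape $\NPCM$ $M$, replacing every pushdown-using transition that reads from a ``wrong'' tape with a two-step gadget that first performs the tape-$j$ read on no store, then performs the pushdown operation while reading $\lambda$ from the designated tape $i$. First I would fix the index $i$ with $1 \le i \le t$ and recall the transition format for a $t$-tape $\NPCM$: a rule has the shape $(q',\gamma) \in \delta(q,a,j,X)$, meaning the machine reads $a \in \Sigma \cup \{\lambda\}$ from tape $j$, pops top-of-stack symbol $X \in \Gamma$, pushes $\gamma \in \Gamma^*$, and moves from state $q$ to $q'$; the (monotonic) counter updates ride along on each rule and are copied verbatim. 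I would classify each rule as either \emph{pushdown-trivial} (it leaves the pushdown unchanged, i.e.\ it merely reads $X$ and pushes back $X$ without inspecting it nontrivially, or is a pure counter/tape move) or \emph{pushdown-active} (it genuinely pops or pushes). Only pushdown-active rules with $j \neq i$ need surgery.

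The construction is as follows. For the target machine $M'$ I keep all states of $M$, add one fresh intermediate state $r_\alpha$ for each offending rule $\alpha = \big((q',\gamma)\in\delta(q,a,j,X)\big)$ with $j \neq i$ that is pushdown-active, and replace $\alpha$ by the pair of rules
\begin{equation*}
(r_\alpha,X)\in\delta'(q,a,j,X), \qquad (q',\gamma)\in\delta'(r_\alpha,\lambda,i,X).
\end{equation*}
The first rule consumes $a$ from tape $j$ but leaves the pushdown untouched (popping $X$ and immediately pushing $X$ back) and applies no counter change; the second rule reads $\lambda$ from tape $i$, applies exactly the pushdown rewrite $X \mapsto \gamma$ and the counter updates of $\alpha$. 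Every other rule of $M$ is copied unchanged into $M'$. By construction, in $M'$ the pushdown is modified only on transitions whose input-tape coordinate is $i$, so $M'$ is $i$-pd-restricted, provided I also ensure that no retained rule with $j\neq i$ is pushdown-active, which the surgery guarantees since those are precisely the rules I rewrote.

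For correctness I would argue a bijection between accepting computations of $M$ and $M'$ that preserves the tuple read on each tape, the final counter totals, and the empty-pushdown / final-state acceptance condition. The forward direction expands each offending step of $M$ into its two-step gadget; since the intermediate state $r_\alpha$ is fresh and its only outgoing transition is the second half of the gadget, and its only incoming transition is the first half, the simulation is forced and deterministic through $r_\alpha$. Crucially, inserting a transition that reads $\lambda$ from tape $i$ does not alter the word read on tape $i$, and splitting the pushdown operation off from the tape-$j$ read leaves the stack contents identical at every point where $M$ and $M'$ synchronize (i.e.\ at the non-intermediate states), because the first half of the gadget is pushdown-neutral. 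The backward direction observes that any accepting computation of $M'$ visits $r_\alpha$ only via the gadget and so contracts uniquely to a computation of $M$. Because the counters are monotonic, the fact that the counter update is deferred to the second gadget step is harmless: the order in which monotonic increments are applied does not affect whether the $C_i,D_i$ pairs match at acceptance, as emphasized in the preliminaries.

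The main obstacle, and the point I would be most careful about, is the interaction between the pushdown emptiness test for acceptance and the case $j=i$ versus $j\neq i$ together with $\lambda$-moves. I would check that the gadget cannot introduce a \emph{new} accepting path that $M$ lacks: the second gadget rule reads $\lambda$ from tape $i$ and so could in principle fire at a moment when $M$ would not be ready, but the forced passage through the fresh state $r_\alpha$ (no other transitions enter or leave it) rules this out. A secondary subtlety is pushdown-active rules that happen to have $j=i$ already: these require no change and must simply be retained, so I must state the surgery as applying only to the pair ``pushdown-active and $j\neq i$.'' I would also confirm that pushdown-trivial rules on tapes $j\neq i$ need not be touched, since the definition of $i$-pd-restricted refers to the pushdown being \emph{used}; if one reads the definition strictly as ``any transition syntactically mentioning the stack,'' I would additionally rewrite those trivial rules to read on tape $i$, but the semantic reading (stack genuinely changed) is the intended one and keeps the construction minimal.
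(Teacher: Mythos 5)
Your proposal is correct and matches the paper's own argument: the paper likewise simulates each pushdown-using transition that reads from a tape $j \neq i$ by first reading $a$ from tape $j$ without touching any store, and then performing the pushdown (and counter) operation in a subsequent transition that reads $\lambda$ from tape $i$. Your added details (the fresh intermediate state $r_\alpha$, the forced two-step gadget, and the bijection between accepting computations) are just a careful elaboration of the same construction, which the paper states as ``easy to verify.''
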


The next proposition follows a proof similar to Lemma \ref{NTMtoNFA} for one direction, where the pushdown is only used in one track because it is only used on one sweep of the $\NTPCM$ Turing tape; and for the other direction it first uses Lemma \ref{rest} and then follows the proof of Lemma \ref{NFAtoNTM} where it is verified that the pushdown changes properly according to the guessed transition sequence by simulating the pushdown in the $i$th sweep of the Turing tape.
\begin{proposition} \label{pda1}
Let $t \ge 0$. There is a $(t+1)$-tape $\NPCM$ $M$
if and only if there is an $i$-pd-restricted $t$-turn $\NTPCM$ $M'$ 
such that $L(M' ) = L(M)^A$, for any $0 \le i \le t$.
\end{proposition}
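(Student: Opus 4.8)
The plan is to prove both implications by adapting the two constructions already established in this section, treating the pushdown as the single new ingredient and exploiting the hypothesis that it is confined to one sweep. For the direction from an $i$-pd-restricted $t$-turn $\NTPCM$ $M'$ to a $(t+1)$-tape $\NPCM$ $M$, I would follow the construction behind Lemma \ref{NTMtoNFA}: put $M'$ in state normal form and let $M$ guess the $(t+1)$-track history string letter-by-letter from left to right, simulating all $t+1$ sweeps in parallel (odd sweeps forward, even sweeps reversed) and reading input tape $j$ exactly when it advances the simulation of sweep $j$, with the monotonic counters simulated verbatim since only their totals matter and reordering increments is harmless. The new point is the pushdown: because $M'$ is $i$-pd-restricted, its pushdown is touched only during sweep $i$, so $M$ can dedicate its single pushdown to that one sweep and leave every other sweep purely finite-state-plus-counter. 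Since sweep $i$ is simulated precisely while $M$ reads the corresponding input tape, the resulting $M$ uses its pushdown only when reading that tape, i.e.\ it is pd-restricted in the multi-tape sense, and $L(M)^A = L(M')$.

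For the converse, given a $(t+1)$-tape $\NPCM$ $M$, I would first invoke Lemma \ref{rest} to assume without loss of generality that $M$ is pd-restricted to the input tape corresponding to sweep $i$, and then follow the construction behind Lemma \ref{NFAtoNTM}. On its first sweep $M'$ writes a guessed transition sequence $\alpha_0 \cdots \alpha_n$ of $M$ on its worktape, and on each subsequent sweep $j$ it rereads this sequence and checks that the characters read on tape $j$ are consistent with it, updating the counters directly. The new ingredient is that, during sweep $i$, $M'$ additionally drives its own pushdown according to the pushdown-using transitions appearing in $\alpha_0 \cdots \alpha_n$, thereby verifying that $M$'s pushdown would behave correctly on tape $i$. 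Because $M$ uses its pushdown only on that one tape, $M'$ uses its pushdown only during sweep $i$, so $M'$ is $i$-pd-restricted and $L(M') = L(M)^A$.

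The delicate step in both directions is an even sweep. When sweep $i$ is a right-to-left pass, the track string (respectively the transition sequence on the worktape) is consumed in the direction opposite to the reading of the matching input tape, so the pushdown computation must be simulated \emph{in reverse}. This is exactly the online reversal used in the standard proof that the pushdown languages are closed under reversal, where a rule that replaces the top symbol $X$ by a string $\gamma$ is run backward by popping $\gamma$ symbol-by-symbol through auxiliary states and then pushing $X$, and here it is carried out by a single stack. I expect this reverse simulation, together with the bookkeeping needed to keep the pushdown operations tied to the correct input tape, to be the main obstacle; the reason it can be done at all is precisely the single-sweep restriction, since one physical stack can realize the pushdown of only one sweep, which is also why the restriction cannot be relaxed.
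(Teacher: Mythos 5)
Your proposal is correct and takes essentially the same route as the paper's proof: one direction adapts the construction of Lemma~\ref{NTMtoNFA}, confining the pushdown simulation to the single track/tape corresponding to sweep $i$, and the converse first applies Lemma~\ref{rest} and then adapts Lemma~\ref{NFAtoNTM}, verifying the pushdown against the guessed transition sequence only during sweep $i$. Your point about running the pushdown in reverse on a backwards sweep matches the paper's own parenthetical remark that the pushdown simulation is done in reverse when sweep $i$ goes right-to-left.
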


For the next proof, we are able to strengthen the result to $M$ being pd-restricted rather than $i$-pd-restricted, seen as follows:
Given a pd-restricted machine $M$, we can make $M_1, \ldots, M_{t+1}$, where each $M_i$ accepts the strings accepted by $M$
for which the pushdown is used in sweep $i$. Thus, $M_i$ is $i$-pd-restricted. Furthermore, $L(M)$ is bounded if and only if
$L(M_i)$ is bounded for each $i$.
The remaining proof is similar to that of
Proposition \ref{boundedNTM} using the fact that emptiness, infiniteness, and
boundedness for $\NPCM$ are decidable \cite{Ibarra1978,Georg}.

\begin{proposition} \label{NTPCM}
For every $t \ge 0$, the boundedness, emptiness, and infiniteness problems for pd-restricted $t$-turn $\NTPCM$ are
decidable, and they are effectively semilinear.
\end{proposition}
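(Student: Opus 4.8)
The plan is to assemble the result from the two reductions already developed, Proposition~\ref{pda1} and the boundedness argument of Proposition~\ref{boundedNTM}, after first reducing the pd-restricted case to the $i$-pd-restricted case. First I would decompose the given pd-restricted $t$-turn $\NTPCM$ $M$ as announced in the paragraph preceding the statement: since $M$ is pd-restricted, in every accepting computation the pushdown is touched on at most one of the $t+1$ sweeps of the worktape, and since $M$ is nondeterministic it can guess this sweep index at the outset and thereafter forbid every pushdown operation outside the chosen sweep, simulating all other sweeps as an ordinary finite-turn $\NTCM$. This produces machines $M_1, \ldots, M_{t+1}$, where $M_i$ is an $i$-pd-restricted $t$-turn $\NTPCM$ accepting exactly those words accepted by $M$ via a computation using the pushdown on sweep $i$, so that $L(M) = \bigcup_{i=1}^{t+1} L(M_i)$. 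Because boundedness is preserved under taking subsets and under finite union \cite{GinsburgCFLs}, $L(M)$ is bounded if and only if each $L(M_i)$ is bounded; similarly $L(M)$ is empty (resp.\ finite) if and only if every (resp.\ some) $L(M_i)$ is empty (resp.\ infinite). Hence it suffices to decide each property for a single $i$-pd-restricted $t$-turn $\NTPCM$.

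For a fixed $i$, Proposition~\ref{pda1} supplies a $(t+1)$-tape $\NPCM$ $N$ with $L(N)^A = L(M_i)$. I would then reuse the argument of Proposition~\ref{boundedNTM} verbatim: the alternating pattern $L(N)^A$ is a finite concatenation of the tape-projection languages $L(N_j)$ and their reverses, and since bounded languages are closed under reverse, subword, subset, and finite concatenation \cite{GinsburgCFLs}, $L(N)^A$ is bounded if and only if each projection $L(N_j)$, $1 \le j \le t+1$, is bounded. Each $L(N_j)$ is a (one-tape) $\NPCM$ language obtained by the projection of Lemma~\ref{ktapeto1}, and boundedness, emptiness, and infiniteness are all decidable for $\NPCM$ \cite{Ibarra1978,Georg}; the corresponding equivalences for emptiness and finiteness follow directly from Lemma~\ref{ktapeto1}. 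Combining this with the sweep decomposition above decides all three properties for $M$. Effective semilinearity follows because $\NPCM$ is effectively semilinear \cite{Ibarra1978} and $\psi(L(M_i)) = \psi(L(N))$, so one can build an $\NFA$ with the same Parikh image for each $M_i$ and then take their union to obtain one for $L(M)$.

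The main obstacle is the very first step: justifying that the pd-restricted condition splits cleanly into the $i$-pd-restricted pieces without enlarging the accepted language or breaking the restriction. The delicate point is that ``single pass'' is a per-computation property rather than a structural one, so the decomposition must rest on the nondeterministic guess of the pushdown sweep and on the observation that, once that sweep is fixed, every other sweep can be simulated with no pushdown access, keeping each $M_i$ genuinely $i$-pd-restricted. Everything after this decomposition is bookkeeping layered on Propositions~\ref{pda1} and~\ref{boundedNTM}, so I would spend most of the write-up making the guess-and-restrict construction and the $L(M) = \bigcup_i L(M_i)$ identity precise.
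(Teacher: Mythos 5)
Your proposal follows essentially the same route as the paper: the paper likewise splits the pd-restricted machine into $i$-pd-restricted machines $M_1,\ldots,M_{t+1}$ (one per sweep on which the pushdown may be used), notes that $L(M)$ is bounded if and only if each $L(M_i)$ is, and then applies Proposition~\ref{pda1} together with the projection argument of Proposition~\ref{boundedNTM} and decidability for $\NPCM$. The only blemish is a wording slip in your union bookkeeping (``$L(M)$ is finite iff some $L(M_i)$ is infinite'' should read ``$L(M)$ is infinite iff some $L(M_i)$ is infinite''), which is clearly just a typo and does not affect the argument.
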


Briefly, the result above can be generalized by replacing the pushdown with other potential types of stores.
Consider an $\NFA$ augmented with a storage structure $S$ and the specification for updating $S$ and possibly some necessary  condition(s) on $S$ for
acceptance, in addition to the machine entering a final state. The storage structure $S$ can include multiple storage structures.  We do not
define such a storage structure formally for simplicity, and the following result can be thought of as a template for other models where decidability of boundedness can be shown. 
However, definitions such as storage structures \cite{EngelfrietCheckingStack} and store types \cite{StoreLanguages} work.
Examples of $S$ are:
pushdown; reversal-bounded counters (or equivalent storage structures such as
monotonic counters);
partially blind counters; and combinations of the structures, e.g., a pushdown and reversal-bounded
 counters.  
 
 We can examine $S$-restricted $t$-turn $\NTM$ augmented by $S$ (denoted
 $\NTM(S)$) where in every accepting computation, $S$ can only be changed within a single sweep of the worktape.
We can show the following seeing that the pushdown in the proof above can be replaced with other storage types. 
\begin{proposition} \label{G3}
Let $\MM$ be a class of $\NFA$ with storage structure $S$, whose
languages are closed under reversal, where the boundedness (resp.\ emptiness, infiniteness)
problem for $\MM$ are decidable.  Then for every $t \ge 0$,
the boundedness (resp.\ emptiness, infiniteness) problem for
$S$-restricted $t$-turn $\NTM(S)$
 are decidable. Furthermore, if $\MM$ is augmented with additional reversal-bounded counters (no restrictions on their use) has a decidable boundedness (resp.\ emptiness, infiniteness) problem, the corresponding problem for $S$-restricted $t$-turn machines with reversal-bounded counters are decidable.
\end{proposition}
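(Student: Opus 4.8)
The plan is to replay the development that produced Propositions~\ref{pda1} and~\ref{NTPCM}, replacing the pushdown by the abstract store $S$ throughout and invoking the two hypotheses on $\MM$ exactly where the concrete structure of the pushdown was previously used. First I would establish the $S$-analogue of the characterization in Proposition~\ref{pda1}: for any $0 \le i \le t$, there is an $i$-restricted $(t+1)$-tape $\NFA(S)$ $M$ (a $(t+1)$-tape member of $\MM$ in which $S$ is consulted or altered only on moves reading from tape $i$) if and only if there is an $i$-$S$-restricted $t$-turn $\NTM(S)$ $M'$ with $L(M') = L(M)^A$. The two constructions are verbatim those of Lemmas~\ref{NTMtoNFA} and~\ref{NFAtoNTM}: guess the $(t+1)$-track worktape history and match each track against the corresponding input tape, simulating $S$ on the single track/tape~$i$, which is legitimate precisely because in the $S$-restricted model $S$ is touched during only one sweep. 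For the direction from multi-tape machine to Turing machine I would first prove the $S$-analogue of Lemma~\ref{rest}: any transition reading from a tape $j \ne i$ while touching $S$ is split into two consecutive transitions, the first reading from tape $j$ and leaving all stores fixed, the second performing the $S$-operation while reading $\lambda$ from tape $i$. This uses nothing about the internals of $S$ and so holds for every store.

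The single place where the pushdown's concrete behaviour was used is the reverse-sweep simulation, and this is exactly where I would invoke closure of $\MM$ under reversal. In the alternating pattern $L(M)^A$, tape $i$ contributes $w_i$ when $i$ is odd and $w_i^R$ when $i$ is even; hence on an even sweep the Turing machine reads the block for tape $i$ in the order opposite to the one in which $M$ reads tape $i$, so the store must process its input reversed. Because the languages of $\MM$ are closed under reversal, there is a machine in $\MM$ realising the reversed store behaviour, and substituting it into the construction yields the desired $M'$ (and conversely). Since the decomposition below needs every value of $i$, including even ones, reversal closure is genuinely required and not a mere convenience.

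With the characterization in hand, boundedness (respectively non-emptiness, infiniteness) is decided exactly as in Proposition~\ref{boundedNTM}. Given an $S$-restricted $t$-turn $\NTM(S)$ $M'$, I would split it, as in the paragraph preceding Proposition~\ref{NTPCM}, into $M'_1, \ldots, M'_{t+1}$, where $M'_i$ accepts the words whose accepting computations use $S$ on sweep $i$; each $M'_i$ is $i$-$S$-restricted and $L(M')$ is bounded iff every $L(M'_i)$ is. By the characterization, $L(M'_i) = L(M_i)^A$ for an $i$-restricted $(t+1)$-tape $\NFA(S)$ $M_i$, and by Lemma~\ref{ktapeto1} together with the facts that bounded languages are closed under reversal, subword, subset, and finite concatenation~\cite{GinsburgCFLs}, $L(M_i)^A$ is bounded iff each projection $L((M_i)_j)$ is bounded. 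Each projection is a one-tape $\NFA(S)$, i.e.\ a member of $\MM$ (the other tapes become $\lambda$-moves), so the assumed decision procedure for $\MM$ settles it. Emptiness and infiniteness go through identically, using that the alternating-pattern map preserves emptiness and infiniteness and the corresponding clauses of Lemma~\ref{ktapeto1}.

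Finally, the ``furthermore'' clause is obtained by taking the combined store to be $S$ together with monotonic (equivalently reversal-bounded) counters. The multi-tape object is then a $(t+1)$-tape machine from the augmented class, the projections land in that same augmented class, and its decision procedure is supplied by hypothesis. Monotonic counters are insensitive to the order in which increments are applied and are symmetric under reversal, so the combined store remains closed under reversal whenever $S$ is, and the reverse-sweep argument of the second paragraph still applies unchanged. The main obstacle I anticipate is making the reverse-sweep store simulation fully rigorous: one must check that closure under reversal can be applied uniformly in $i$ inside the track-by-track simulation, that the resulting machine still respects the $S$-restriction, and that the projection operation really keeps each $M_j$ inside $\MM$ rather than in some larger multi-tape class.
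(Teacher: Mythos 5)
Your proposal is correct and follows essentially the same route as the paper: the paper's own proof of Proposition~\ref{G3} is just the remark that the pushdown in Lemma~\ref{rest}, Proposition~\ref{pda1}, and Proposition~\ref{NTPCM} can be replaced by the abstract store $S$, which is exactly the replay you carry out. You additionally make explicit the one point the paper leaves implicit --- that the reversal-closure hypothesis on $\MM$ is what replaces the pushdown's ability to be simulated backwards in time on even (right-to-left) sweeps, with the monotonic counters handled separately by order-insensitivity --- and your decomposition into sweep-restricted machines and one-tape projections matches the paper's argument for Propositions~\ref{boundedNTM} and~\ref{NTPCM} verbatim.
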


This provides new results for certain general types of automata with decidable properties. For example, checking stack automata provide a worktape that can be written to before the first turn, and then only operate in read-only
mode. They have a decidable emptiness and infiniteness problem. If we augment these with a finite-turn worktape
where the checking stack could only be used in a single sweep, emptiness and infiniteness are decidable. 

As with other results in this paper, a $t$-turn $\NTM$ (combined with other stores)
can be replaced with a $t$-turn checking stack.
The restriction on $\NTPCM$ to be $S$-restricted in Proposition \ref{NTPCM}
is needed, as the next proposition shows. Let $\DCSA$ be deterministic
checking stack automata.
The first point
uses undecidability of non-emptiness of the intersection of two 1-turn deterministic pushdown automata \cite{Baker1974},
the second problem uses undecidability of the halting problem for Turing machines \cite{HU},
the third point uses the undecidability of the halting
problem for 2-counter machines \cite{Minsky}, and the fourth point uses the third point.

\begin{proposition} \label{undec}
The emptiness (boundedness, infiniteness) problems are undecidable
for the following models:
\begin{enumerate}
\item
1-turn $\NTM$ (or $\DCSA$) with a 1-turn pushdown. 
\item
2-turn $\DCSA$ with a 1-turn pushdown, even when the pushdown is used
only during the checking stack reading phase (i.e., after turn 1).
\item 1-turn $\NTM$ with an unrestricted counter.
\item 1-turn deterministic pushdown automata with an unrestricted counter.

\end{enumerate}
\end{proposition}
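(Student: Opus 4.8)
The plan is to reduce a known-undecidable problem to each item, in every case building a machine of the stated type whose \emph{emptiness} encodes that problem; undecidability of boundedness and infiniteness then follows uniformly. Each model is effectively closed under right concatenation with $\$\Sigma^*$ (a fresh symbol $\$$, with $|\Sigma|\ge 2$): the suffix is read entirely in the finite control, using no store operation and hence adding no turn, and the added behaviour is deterministic, so from $M$ in the model one obtains $M'$ in the model with $L(M')=L(M)\$\Sigma^*$. Since $L\$\Sigma^*$ is infinite, and unbounded, exactly when $L\neq\emptyset$, and empty (hence bounded and finite) exactly when $L=\emptyset$, it suffices to show emptiness undecidable for each item. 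For item~(1) I would reduce the non-emptiness of $L_1\cap L_2$ for $1$-turn $\DPDA$s $P_1,P_2$, which is undecidable \cite{Baker1974}. The two stores run $P_1$ and $P_2$ in parallel on one common one-way input: the pushdown simulates the (1-turn) stack of $P_1$, while the worktape (resp.\ checking stack) simulates that of $P_2$ --- a 1-turn stack being realised on a worktape/checking stack as one write pass followed by one backward read pass. Interleaving the two simulations one store per step keeps each store 1-turn, and the machine accepts $w$ iff $w\in L_1\cap L_2$.

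Items~(3) and~(4) both reduce the halting problem for deterministic 2-counter machines \cite{Minsky}, normalised to halt with both counters zero. For item~(3) the $1$-turn $\NTM$ guesses the instruction sequence of a halting run, writing it on the worktape (rightward pass) while using the single unrestricted counter to verify counter~$1$ (its values, non-negativity, and zero-tests); it then makes its one turn and reads the run back (leftward pass), reusing the freed counter to verify counter~$2$. Reading the run backwards is sound because the value sequence of a counter run between $0$ and $0$ is again a valid non-negative run when traversed in reverse with increments and decrements interchanged. Letting the machine also consume arbitrary input, it accepts all of $\Sigma^*$ when the 2-counter machine halts and $\emptyset$ otherwise, so all three problems are immediately undecidable. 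Item~(4) removes the nondeterminism: since the machine is deterministic its halting run is unique, so I would place the candidate run on the one-way input. A $\DPDA$ with an unrestricted counter then reads it deterministically, \emph{pushing} it onto the 1-turn pushdown while verifying counter~$1$ forwards and \emph{popping} it while verifying counter~$2$ backwards, the pushdown playing exactly the role of the 1-turn worktape above.

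Item~(2) is the crux, reducing the halting problem for Turing machines \cite{HU} through the language of valid halting computations $C_0\# C_1\# \cdots \# C_m$, with $C_0$ initial and $C_m$ halting. I would pad every configuration with blanks to a common length $\ell$ and write them in an appropriate reverse-every-other (alternating) encoding, as in Section~\ref{sec:finiteturn}. The $\DCSA$ first copies the entire input onto its checking stack during the write phase, so the pushdown is untouched before turn~$1$, as required. It then replays the stored computation in its two reading passes --- pushing during the downward pass and popping during the upward pass --- so that the 1-turn pushdown compares the computation against its own one-configuration shift. Since all configurations have length $\ell$, this shift aligns each $C_i$ with $C_{i+1}$ position-by-position, and the finite control verifies the successor relation locally (using the head marker to find the single window of change and checking equality elsewhere), the $\#$-markers keeping the two streams synchronised; thus one push/pop pass certifies \emph{every} transition at once. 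The computation is valid iff the Turing machine halts, so emptiness is undecidable. The main obstacle is exactly this bookkeeping: choosing the write/read directions and the equal-length, alternating-reversed encoding so that a \emph{single}-turn pushdown --- pushing on one checking-stack sweep and popping on the other --- realises the one global comparison that captures transitions of both parities at once. Once the alignment is arranged, the remaining verifications are routine.
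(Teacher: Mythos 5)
Your proposal is correct, and while it follows the paper's overall skeleton (reduce everything to emptiness, then get boundedness and infiniteness from concatenation with $\$\Sigma^*$) and coincides with the paper on item (1), your constructions for items (2)--(4) are genuinely different. For items (3) and (4) the paper leans on Minsky's phase normal form: the 1-turn $\NTM$ nondeterministically writes the guessed phase-end values of $C_1$ as blocks $a^{d_k}\#b^{d_k}\#\cdots\#a^{d_1}\#b^{d_1}$ (using the unrestricted counter to duplicate each block), and after the turn the leftward pass over these blocks simulates $C_1$ while the counter simulates $C_2$; item (4) is then obtained by feeding that machine's transition sequence to a deterministic simulator as input. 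You instead write the guessed instruction sequence itself, verify $C_1$ with the counter on the forward pass and $C_2$ on the backward pass, justified by the observation that a counter run from $0$ to $0$ stays a valid non-negative run under reversal; this avoids the phase normal form, handles zero-tests of $C_2$ correctly (the reversed traversal visits exactly the forward values), and makes item (4) a direct determinization of the same scheme. For item (2) the divergence is sharper: the paper splits the computation by parity in the input itself, $I_1\#I_3\cdots\#I_{2k-1}\$ I_{2k}^R\#\cdots\#I_4^R\#I_2^R$, stores the odd configurations on the checking stack and the even ones on the pushdown, and checks odd-to-even transitions on the downward sweep and even-to-odd transitions on the upward sweep --- two separate comparisons, but no padding, since consecutive configurations differ in length by at most one. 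You store the whole computation on the checking stack, reload it onto the pushdown on the downward sweep, and pop with a one-configuration lag on the upward sweep, so a single shifted self-comparison certifies all transitions at once, at the price of equal-length padding (which, as you note, the $\#$-alignment itself enforces).

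One correction to your item (2): the ``reverse-every-other (alternating) encoding'' you reach for is not only unnecessary but would break the alignment you rely on. In your scheme the computation passes through two reversals --- the downward sweep reads the checking stack backwards, and popping reverses the pushed string once more --- so the popped stream and the upward-read stream both emerge in forward order, and the plain equal-length forward encoding is precisely what makes the popped symbol of $C_i$ meet the read symbol of $C_{i+1}$ at the same position, enabling the finite-control window check. With an alternating encoding, position $p$ of $C_i$ would instead meet position $\ell+1-p$ of $C_{i+1}$, which a finite-state control cannot verify locally. There is also no ``parity of transitions'' to be captured: that issue is an artifact of the paper's two-sweep comparison, which your shifted self-comparison avoids entirely. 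With that remark excised, your construction is sound.
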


\subsection{Finite-Turn $\NTM$ with Partially Blind Counters}

Partially blind counter machines are multicounter machines ($\PBCM$) where the counters can be incremented or decremented but not
tested for zero, however the machine crashes if any of the 
counters becomes negative, and acceptance occurs when the machine enters an
accepting state with all the counters being zero. The emptiness, finiteness, \cite{G78} and boundedness problems \cite{Georg} have
been shown decidable for vector addition systems with states, which are equivalent to partially blind multicounter machines \cite{G78}.  The family of languages accepted by these machines
is a
recursive family, does not contain all context-free languages (as in the example below), but contains non-semilinear languages \cite{G78} (unlike all the other models considered so far in this paper).

Here, we look at $t$-turn $\NTM$ augmented with partially blind counters,
called $t$-turn $\NTPBCM$. 
It is pointed out in \cite{G78} that
$L = \{w \#w^R ~|~ w \in \{a,b\}^*\}$ is not accepted by any $\PBCM$. However, it is easily accepted by a $\NTPBCM$ (or even a 2-turn $\NTM$), which are therefore 
strictly more powerful.

We could augment a $\NTPBCM$ (or $\PBCM$, $t$-tape $\PBCM$) with
monotonic counters, but it is straightforward to see that each pair of monotonic counters can be simulated by a pair of
partially-blind counters, and we therefore do not consider these machines with additional reversal-bounded counters.

The results in Section \ref{sec:finiteturn} concerned finite-turn $\NTM$, optionally augmented with monotonic counters.  We
will see next that these results hold if ``monotonic counters'' is replaced
by ``partially blind counters''.  However, monotonic counters are easy to handle, as we can permute the order that counter changes are applied in an accepting computation and the resulting computation does not change the counter values. But this is not so for partially blind counters as changing the orders can cause counters to go below zero, which is not allowed. But we 
can modify the proof as follows:
\begin{proposition} \label{multi-PBCM}
Let $t \ge 0$. There is a $(t+1)$-tape $\PBCM$ $M$
if and only if there is a $t$-turn $\NTPBCM$ $M'$ 
such that $L(M' ) = L(M)^A$.
\end{proposition}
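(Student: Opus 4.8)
The plan is to prove both implications by reusing the two constructions underlying Proposition~\ref{char} (the analogues of Lemma~\ref{NTMtoNFA} and Lemma~\ref{NFAtoNTM}), since everything concerning the input tapes, the worktape history $H(M')$, and the alternating pattern $(\cdot)^A$ is exactly as before; only the handling of the partially blind counters is new. For the forward implication (given a $(t+1)$-tape $\PBCM$ $M$, build a $t$-turn $\NTPBCM$ $M'$ with $L(M')=L(M)^A$) I would follow Lemma~\ref{NFAtoNTM} verbatim for the tape handling: $M'$ writes a guessed transition sequence $\alpha_0\cdots\alpha_n$ of $M$ on the worktape during the first sweep and then makes its turns to check that the letters read from tape $i$ by the sequence spell $w_i$ (forward if $i$ is odd, reversed if $i$ is even). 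The only change is that \emph{all} partially blind counter operations of $M$ are applied, in the order $\alpha_0,\ldots,\alpha_n$, during this first sweep. Since this is precisely the temporal order in which $M$ applies them, the counter trajectory of $M'$ coincides with that of $M$; hence no counter ever goes negative and all counters are zero at the end exactly when $M$ accepts, and they stay untouched during the remaining sweeps. So this direction needs no real modification.

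The hard direction is the converse, the analogue of Lemma~\ref{NTMtoNFA}: given $M'$, we guess the $(t+1)$-track history string $x\in\Delta^*$ once, left to right, and simulate all $t+1$ sweeps of $M'$ \emph{in parallel}, column by column, checking $x\in H(M')$ and the tape contents as before. The obstacle is exactly the one flagged in the text: the partially blind operations of $M'$ are now applied in this column-major order rather than in the boustrophedon (sweep-by-sweep) order of $M'$, and reordering partially blind operations can drive a counter below zero. Writing $o_{i,c}$ for the counter change at cell $c$ during sweep $i$, $V_i$ for the counter value after sweep $i$ (so $V_0=V_{t+1}=0$), and $P_i(c)=\sum_{c'\le c}o_{i,c'}$, the validity of $M'$'s run amounts to $V_{i-1}+P_i(c)\ge 0$ for odd $i$ and $V_i-P_i(c)\ge 0$ for even $i$, for all $c$, together with $V_{t+1}=0$.

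My plan is to verify these conditions in column-major order using one auxiliary partially blind counter $C_i$ per sweep (per counter of $M'$, so a fixed number in total). For odd $i$, $C_i$ starts at $V_{i-1}$ and is incremented by $o_{i,c}$ at column $c$; for even $i$ I exploit the elementary identity that reversing a run of counter moves while negating each move revisits exactly the same values, so $C_i$ starts at $V_i$ and is \emph{decremented} by $o_{i,c}$ at column $c$. Keeping every $C_i\ge 0$ throughout the scan (the crash condition) then enforces precisely the displayed non-negativity constraints. I expect the genuine difficulty to be that the offsets $V_i$ are global accumulations unavailable during a single left-to-right scan. I would resolve this by \emph{guessing} the even-indexed boundary values $V_2,V_4,\ldots$ at the outset, loading each simultaneously into the two counters that use it as their start value ($C_1$ simply starts at $0$), and \emph{reconciling} the odd-indexed boundary values at the end: after the scan $C_{2k-1}$ and $C_{2k}$ both hold $V_{2k-1}$, so draining each such pair simultaneously to zero both forces their equality and discharges the chaining equation $V_i=V_{i-1}+\Delta_i$, with the terminal counter handled directly. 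A run of $M$ then avoids a crash and ends with all counters zero iff the guesses were the true boundary values and $M'$'s counter run was valid, giving $L(M)^A=L(M')$. Decidability of emptiness, infiniteness, and boundedness, and effective semilinearity, then follow from the known results for $\PBCM$ together with the reduction from $t$-tuple languages to one-tape machines in Lemma~\ref{ktapeto1}, exactly as in Proposition~\ref{boundedNTM}.
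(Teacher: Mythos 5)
Your proposal is correct and takes essentially the same approach as the paper's own proof: the forward direction applies all counter operations in their original order during the first sweep while guessing the transition sequence, and the reverse direction is precisely the paper's construction of one partially blind counter per sweep per original counter, simulating odd sweeps forward and even sweeps reversed-and-negated, with guessed boundary values loaded simultaneously into the adjacent pair of counters that share them and verified at the end by simultaneously draining the matching pairs to zero (the terminal counter being checked by the acceptance condition itself). Your explicit formulation of the non-negativity constraints $V_{i-1}+P_i(c)\ge 0$ and $V_i-P_i(c)\ge 0$ is a slightly more detailed justification than the paper gives, but the underlying construction is identical.
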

\begin{proof}
One half of the proof follows an identical construction to that in the proof of
Lemma \ref{NFAtoNTM} where all counters are simulated on the first sweep while guessing the transition sequence and the order of counter changes is the same.

For the reverse direction, the construction is a
modification to that of Lemma \ref{NTMtoNFA}. We describe
the construction of $M$ from $M'$.
Note that $M'$ makes $s = (t+1)$ left-to-right and right-to-left
sweeps on its worktape. If $M'$ has $k$ partially blind
counters $C_1, \ldots, C_k$, $M$ will have $sk$ partially blind counters called
$C_{1,j}, \ldots, C_{k,j}$ for $1 \le j \le s$.
The  simulation  of all sweeps of the computation of $M'$
on its worktape are done in parallel.
The counters in $C_{1,j}, \ldots, C_{k,j}$ are used to simulate the counters
of $M$ in sweep $j$. For odd $j$, the simulation is faithful, but
for even $j$, the simulation is backwards. The counters in $C_{i,1}$ are initially zero, as are $C_{i,s}$ if $s$ is even. For all $j$ even, $C_{i,j}$ and 
$C_{i,j+1}$ are set nondeterministically to be the same guessed values. The simulation of the computation of $M'$
on the even tracks of the worktape (using counters
$C_{i,j}$, $j$ even) is done in reverse and in parallel with
the simulation of the odd tracks (using counters in $C_{i,j}$
$j$ odd).  When the simulation reaches the end of the worktape and
$M'$ enters an accepting state, for all odd $j$, $j<s$, the counters in $C_{i,j}$ and $C_{i,j+1}$
are decremented simultaneously a nondeterministically guessed number
of times to verify  that they are the same (and if $j=s$ is is not changed thereby verifying that it is zero).
Then  $M$ enters a final state. Figure \ref{pblindfigure} demonstrates an example. This technique allows counters to be adjusted in a different order.
\qed \end{proof}

\begin{figure}[t]
\centering
\subcaptionbox{}{\includegraphics[width=.45\textwidth]{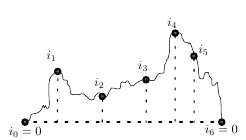}}
\subcaptionbox{}{\includegraphics[width=.45\textwidth]{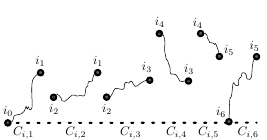}}
\caption{In (a), it shows the counter values of counter $i$ is an original accepting computation of $M'$, where $i_j$ is the value at turn $j$ of the Turing tape. In (b), we see the modified computation of $M$ with each counter simulating in parallel.}
\label{pblindfigure}
\end{figure}

From Proposition \ref{multi-PBCM} and decidability of boundedness for $\PBCM$, we obtain:
\begin{proposition} \label{boundPB}
The boundedness, emptiness, and finitenesss problems for finite-turn $\NTPBCM$ are decidable.
\end{proposition}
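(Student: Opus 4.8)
The plan is to reduce the decision problems for finite-turn $\NTPBCM$ to the corresponding problems for multi-tape $\PBCM$, using the characterization established in Proposition \ref{multi-PBCM}. Given a $t$-turn $\NTPBCM$ $M'$, that proposition produces a $(t+1)$-tape $\PBCM$ $M$ with $L(M')=L(M)^A$, so it suffices to decide the relevant property of $L(M)^A$ from $M$.

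First I would handle boundedness by mirroring the argument of Proposition \ref{boundedNTM}. The key claim is that $L(M)^A$ is bounded if and only if each component language $L(M_i)$ is bounded, where $M_i$ is the one-tape $\PBCM$ obtained from $M$ as in Lemma \ref{ktapeto1}. For the forward direction, if $L(M)^A = w_1 w_2^R \cdots$ (the alternating concatenation) is bounded, then since boundedness is preserved under taking reverses, subwords, and subsets \cite{GinsburgCFLs}, each $L(M_i)$ (a subword set for odd $i$, its reverse for even $i$) is bounded. For the converse, if every $L(M_i)$ is bounded, then by Lemma \ref{ktapeto1} $L(M)$ is a bounded $(t+1)$-tuple language, and since the finite concatenation of bounded languages is bounded \cite{GinsburgCFLs}, $L(M)^A$ is bounded. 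Since each $L(M_i)$ is accepted by a one-tape $\PBCM$, and boundedness for $\PBCM$ is decidable \cite{Georg,boundedVASS}, the whole property is decidable. Emptiness and finiteness follow the same pattern: $L(M)^A$ is empty (resp.\ finite) exactly when $L(M)$ is, which by the non-emptiness and finiteness half of Lemma \ref{ktapeto1} reduces to deciding each $L(M_i)$, and emptiness and finiteness for $\PBCM$ are decidable \cite{G78}.

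The main obstacle I anticipate is not the boundedness reduction itself, which transfers verbatim from Proposition \ref{boundedNTM}, but ensuring that Proposition \ref{multi-PBCM} genuinely gives a $\PBCM$ on the multi-tape side so that the cited decidability results for $\PBCM$ apply. In particular, one must confirm that the component machines $M_i$ constructed via Lemma \ref{ktapeto1} remain honest partially-blind counter machines (the construction only inserts $\lambda$-reads on other tapes, which does not alter counter behavior or the zero-acceptance condition), so that $L(M_i)$ is a bona fide $\PBCM$ language. Given this, and noting that $\PBCM$ languages are trivially closed under reversal (so that the even-index component reversals stay within the class), every invocation of a decidability result is legitimate, and the three problems are decidable.

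I note that, unlike the earlier $\NFA$ and $\NCM$ cases, effective semilinearity is \emph{not} claimed here, which is consistent with $\PBCM$ accepting non-semilinear languages; so the proof should make no appeal to semilinearity and should rely solely on the decidability of boundedness, emptiness, and finiteness for $\PBCM$ together with the structural closure properties of bounded languages.
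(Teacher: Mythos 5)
Your proof is correct and takes essentially the same approach the paper intends: it instantiates the template of Proposition \ref{boundedNTM} with the characterization of Proposition \ref{multi-PBCM}, reducing via Lemma \ref{ktapeto1} and the closure facts for bounded languages to the decidability of boundedness, emptiness, and finiteness for one-tape $\PBCM$ \cite{G78,Georg}. Your worry about $\PBCM$ closure under reversal is unnecessary (though harmless), since boundedness of $L(M_i)^R$ is equivalent to boundedness of $L(M_i)$ as a property of bounded languages, so no decision procedure ever needs to run on a reversed $\PBCM$ language.
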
 
We believe that this is a new result for all three decision problems, and in particular, decidability of boundedness
is quite powerful.

\subsection{Simple Matrix Grammars}

A matrix grammar
has a finite set of matrix rules of the form
$[ A_1 \rightarrow w_1, \ldots, A_k \rightarrow w_k ]$,
where each $A_i \rightarrow w_i$ is a context-free production.
In the derivation, at each step a matrix is chosen nondeterministically, whereby the context-free rules of the matrix must be applied in order to the sentential form to produce the next sentential form.
An $n$-simple matrix grammar ($n$-$\SMG$) (from \cite{IbarraSMG}), a restricted form
of a matrix grammar, is a tuple $G = (V_1, \ldots , V_n, \Sigma, P, S)$, where
$V_1, \ldots , V_n$ are disjoint sets of nonterminals, $\Sigma$
is the terminal alphabet, $S$ is a start nonterminal not in
$(V_1 \cup \cdots \cup V_n)$, and $P$ is a finite set of rules of the form:
\begin{enumerate}
\item $S \rightarrow A_1 \cdots A_n$, where each $A_i \in V_i$,
\item $[X_1 \rightarrow w_1, \ldots , X_n \rightarrow w_n]$, where $X_i \in V_i$ and $w_i \in (V_i \cup \Sigma)^*$, and the number of nonterminals in
$w_i$ is equal to the number of nonterminals in $w_j$ for all $i \ne j$.
\end{enumerate}
The derivation relation enforces that in each rule of type 2., always the leftmost nonterminal of $V_i$ in the sentential form is rewritten (precise definition of the derivation relation is in \cite{IbarraSMG}).
The language $L(G)$ consists of all strings $w \in \Sigma^*$
that can be derived starting from $S$ and applying the rules in such a
leftmost derivation.  
Note that a $1$-$\SMG$ is just a context-free grammar. It is known \cite{IbarraSMG} that
without either the restriction of the number of nonterminals being the same, or not requiring leftmost derivation,
grammars can generate more languages.

The following result follows from \cite{IbarraSMG} and \cite{multitapeNPDA}. 
\begin{proposition} \label{smnocounter}
$L$ is generated by an $n$-$\SMG$ $G$  if and only if there is an $n$-tape $\NPDA$
$M$ accepting $L' \subseteq \langle \Sigma^* \rangle^n$ such that $L = \{ x_1 \cdots x_n  \mid (x_1, \ldots , x_n) \in L'\}$.
\end{proposition}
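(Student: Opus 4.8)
**The plan is to prove both directions by translating between the leftmost "parallel" derivation of an $n$-$\SMG$ and the simultaneous $n$-tape reading of an $n$-tape $\NPDA$, using the structural match between the matrix mechanism and the synchronized pushdown.**

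For the forward direction, suppose $L = L(G)$ for an $n$-$\SMG$ $G = (V_1, \ldots, V_n, \Sigma, P, S)$. The key observation is that in a leftmost derivation of $G$, the $i$th "track" of the sentential form (the portion rewritten using nonterminals from $V_i$) behaves exactly like an independent context-free derivation, except that all $n$ tracks are forced to proceed in lockstep by the matrix rules: each matrix $[X_1 \to w_1, \ldots, X_n \to w_n]$ applies one production to each track simultaneously, and the constraint that each $w_i$ has the same number of nonterminals guarantees the tracks stay synchronized in their "derivation length." I would build an $n$-tape $\NPDA$ $M$ that simulates all $n$ of these context-free derivations in parallel using a single pushdown whose stack symbols encode $n$-tuples of grammar symbols (one per track). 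The machine pushes the start configuration $A_1 \cdots A_n$ guessed from the initial rule $S \to A_1 \cdots A_n$, and each matrix rule is simulated by a synchronized pop-and-push on the tuple-encoded stack, reading the terminal prefix $u_i$ generated in track $i$ from input tape $i$. Because a single pushdown suffices to drive $n$ synchronized leftmost CFG derivations (the synchronization on nonterminal counts is exactly what lets one stack track all $n$ simultaneously), $M$ accepts $(x_1, \ldots, x_n)$ iff each $x_i$ is the terminal yield of track $i$, so $L = \{x_1 \cdots x_n \mid (x_1, \ldots, x_n) \in L(M)\}$.

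For the reverse direction, given an $n$-tape $\NPDA$ $M$ accepting $L' \subseteq \langle \Sigma^* \rangle^n$, I would construct an $n$-$\SMG$ $G$ by adapting the classical triple-construction that converts an $\NPDA$ into an equivalent $\CFG$ via nonterminals of the form $[q, X, p]$ recording that $M$ goes from state $q$ to state $p$ while popping $X$. Here each such "triple" nonterminal gets split across the $n$ tracks: the matrix rules force all $n$ tracks to advance according to the same sequence of $M$'s transitions, so that track $i$ emits exactly the symbols $M$ reads from tape $i$. The same-number-of-nonterminals constraint on matrix rules mirrors the fact that each transition of $M$ affects all tapes' bookkeeping in a coordinated way, and leftmost rewriting corresponds to the standard leftmost expansion of the CFG triples.

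\textbf{Main obstacle.} The hard part will be the bookkeeping in the reverse direction: ensuring that the $n$ tracks of the simple matrix grammar remain synchronized with a single shared pushdown computation of $M$, since $M$'s stack is one object but its transitions may read from any of the $n$ tapes on a given move. I expect to handle this by having each matrix rule correspond to one transition (or one "unit" of the triple-decomposition) of $M$, with the track that is not being read emitting $\lambda$ via a nonterminal that rewrites without producing terminals, while still respecting the equal-nonterminal-count requirement; verifying that leftmost derivation in $G$ faithfully reproduces exactly the accepting computations of $M$ (and no spurious ones) is the delicate invariant to establish. Fortunately, this correspondence is precisely what is established in \cite{IbarraSMG} and \cite{multitapeNPDA}, so the proof reduces to citing and assembling those two characterizations rather than reproving the full simulation from scratch.
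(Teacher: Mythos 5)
Your proposal is correct and takes essentially the same approach as the paper: the paper gives no independent proof of this proposition, stating only that it ``follows from \cite{IbarraSMG} and \cite{multitapeNPDA},'' and your argument likewise reduces in the end to citing and assembling those same two references. Your accompanying sketches (the tuple-encoded single pushdown exploiting the equal-nonterminal-count constraint for the forward direction, and the synchronized triple construction with $\lambda$-emitting tracks for the converse) are consistent with the constructions established in the cited works.
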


We can generalize the definition of a simple matrix grammar by augmenting
it with monotonic counters. Then, in every matrix rule, each context-free production includes
$2k$ counter increments, and for a string to be generated, the counter values
in counter $i$ and $i+1$ have to be equal, for $i$ odd.
Proposition \ref{smnocounter} can be generalized to include counters.
\begin{proposition}
$L$ is generated by an $n$-$\SMG$ with $2k$ monotonic counters $G$ if
and only if there a an $n$-tape $\NPDA$ with $2k$ monotonic counters (which
is equivalent to an $n$-tape $\NPCM$) $M$ accepting $L' \subseteq \langle \Sigma^* \rangle^n$ such
that $L = \{ x_1 \cdots x_n \mid  (x_1, \ldots , x_n) \in L'\}$.
\end{proposition}

Using a proof similar to the decidability problems shown using other multi-tape characterizations in this paper, we obtain:
\begin{proposition} The emptiness, infiniteness, and boundedness problems
for simple matrix grammars (resp.\ with monotonic counters) are decidable, and they are effectively semilinear.
\end{proposition}

\section{Store Languages for the Boundedness Problem}

To summarize, so far we have determined several new classes of machines for which the boundedness problem is
decidable. One of the largest is finite-turn
$\NTM$ with reversal-bounded counters and a pushdown where, in each accepting computation, the pushdown can only be used within a single sweep of the Turing worktape. In this section, we determine one more class that is even more general than this one.
The algorithm provides an entirely different technique than multi-tape characterizations that we have used thus far.

We focus on finite-flip $\NPDA$ \cite{flipPushdown}.
A $t$-flip (resp.\ finite-flip) $\NPCM$ augments a $t$-flip $\NPDA$ with $k$-reversal-bounded counters.
With this model, configurations are of the form
$(q,w,Z_0 \gamma, i_1, \ldots, i_k)$ where $q$ is the current state, $w$ is the remaining input, $Z_0 \gamma$ is the current pushdown contents, and $i_j$ is the current contents of counter $j$.

In \cite{StoreLanguages,IbarraMcQuillanVerification}, the authors study the concept of a {\em store language of a machine $M$} for arbitrary types of automata, which is essentially a language description of all the store contents that can appear in any accepting computation of the machine.
So, for a $t$-flip $\NPCM$ $M = (Q,\Sigma, \delta,q_0,F)$, the store language of $M$, 
\begin{equation*}S(M) = \{ q Z_0 \gamma c_1^{i_1} \cdots c_k^{i_k} \mid \begin{array}[t]{l}  (q_0, w, Z_0, 0, \ldots, 0) \vdash_M^* (q, w', Z_0\gamma, i_1, \ldots, i_k) \vdash_M^* \\  ( q_f, \lambda, Z_0 \gamma', i_1', \ldots, i_k'),  \\ q_f \in F, w,w' \in \Sigma^*, \gamma' \in \Gamma^*, i_1', \ldots, i_k'  \ge 0\} ,\end{array}\end{equation*}
where $c_1, \ldots, c_k$ are new special symbols associated with the counters.
In \cite{IbarraMcQuillanVerification}, the authors showed that the store language of every $t$-flip $\NPDA$ (resp.\ $t$-flip $\NPCM$) is in fact a regular (resp.\ $\NCM$) language. Therefore, the pushdown can be essentially eliminated. This is a generalization of the  important result that the store language of any $\NPDA$ is regular \cite{GreibachCFStore}.

The next proof uses an inductive procedure (informally described without counters) where we know $0$-flip $\NPDA$s (equal to the context-free languages) have a boundedness problem. And inductively, if we have an $r+1$-flip $\NPDA$, we can create two machines, an $r$-flip machine that accepts the parts of the inputs of $M$ read during the first $r$ flips that eventually leads to acceptance, and a $0$-flip machine that accepts the parts of the inputs of $M$ from which, with no flips, it will eventually accept. These two languages use the store languages, which can be accepted by finite automata.
The purpose of the store languages should be noted. Simply using the fact that $r+1$-flip $\NPDA$ are
closed under gsm mappings, it is immediately evident that both of these languages can be accepted by $r+1$-flip
$\NPDA$ (just using closure properties). But, by using the store language, it is possible to accept the first with only
an $r$-flip $\NPDA$ and the second with a $0$-flip $\NPDA$. This is needed to make the induction work, so that essentially we can decide boundedness up to any given $r$.

\begin{proposition}
The boundedness, emptiness, and infiniteness problems are decidable for finite-flip $\NPDA$ (resp.\ finite-flip $\NPCM$).
\end{proposition}

Lastly, we consider machines with a finite-flip pushdown, reversal-bounded counters, and a finite-turn worktape. Such a machine is {\em pd-restricted} if, in every accepting computation, the finite-flip pushdown can only be used in one left-to-right sweep or right-to-left sweep of the worktape.
Finally, by Proposition \ref{G3}:
\begin{corollary}
The class of pd-restricted finite-flip $\NPCM$ augmented with a finite-turn worktape has a decidable boundedness, emptiness, and infiniteness problem.
\end{corollary}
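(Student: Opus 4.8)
The plan is to obtain this corollary as a direct instance of Proposition \ref{G3}, taking the storage structure $S$ to be a finite-flip pushdown and $\MM$ to be the class of finite-flip $\NPDA$. With this choice, an $S$-restricted finite-turn $\NTM(S)$ augmented with (unrestricted) reversal-bounded counters is exactly a pd-restricted finite-flip $\NPCM$ with a finite-turn worktape: the finite-turn $\NTM$ supplies the finite-turn worktape, the $S$-restriction says that the finite-flip pushdown may be altered only within a single sweep of that worktape (which is precisely the meaning of pd-restricted here), and the reversal-bounded counters are carried along with no restriction. So it suffices to check that $\MM$ meets the hypotheses of Proposition \ref{G3}.

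Two of the three hypotheses are immediate from results already established. The preceding proposition shows that the boundedness, emptiness, and infiniteness problems are decidable for finite-flip $\NPDA$ (this is $\MM$ itself) and for finite-flip $\NPCM$ (this is $\MM$ augmented with reversal-bounded counters). These supply both the decidability hypothesis on $\MM$ and the decidability hypothesis on $\MM$ with reversal-bounded counters needed for the ``furthermore'' clause of Proposition \ref{G3}.

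The one remaining hypothesis, and the main point to verify, is that the languages of $\MM$ are closed under reversal. I would establish this by a direct construction: given a $t$-flip $\NPDA$ $M$ for a language $L$, build a $t$-flip $\NPDA$ $M'$ that simulates an accepting computation of $M$ in reverse while reading its input in the opposite direction. Under this reversal, pushes and pops of the pushdown interchange in the usual manner of the standard closure of the context-free languages under reversal, and, crucially, each flip of the pushdown maps to a flip in the reversed computation, so the number of flips is preserved and $M'$ is again a $t$-flip machine accepting $L^R$; alternatively, this closure can be extracted from the structural results on flip-pushdown automata in \cite{flipPushdown}. The closure also carries to the counter version, since $\NCM$ is closed under reversal and the reversal-bounded counters are simply run in reverse alongside the pushdown. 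The delicate bookkeeping is exactly the interaction between the flip operation and the reversed push/pop discipline, so this is where I expect the real work to lie; once it is in place, Proposition \ref{G3} delivers the decidability of boundedness, emptiness, and infiniteness for the full model, completing the proof.
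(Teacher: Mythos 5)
Your proposal takes essentially the same route as the paper: the corollary is obtained there precisely by instantiating Proposition~\ref{G3} with $S$ a finite-flip pushdown and $\MM$ the class of finite-flip $\NPDA$, with both decidability hypotheses supplied by the immediately preceding proposition on finite-flip $\NPDA$ and finite-flip $\NPCM$. Your explicit verification of the remaining hypothesis---closure of finite-flip $\NPDA$ (and the counter-augmented version) under reversal, via a flip-count-preserving reverse simulation---is a detail the paper leaves implicit, and your sketch of it is sound.
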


\section{Conclusions}
In this paper, we study powerful one-way nondeterministic machine models, and find new models where the boundedness, emptiness, and infiniteness problems are decidable. The largest of these are finite-turn Turing machines augmented by partially blind counters, and finite-turn Turing machines augmented by a pushdown that can be flipped a finite number of times, and reversal-bounded counters, where the pushdown can only be used in one sweep of the Turing worktape. It also shows two new techniques to show these problems are decidable.

\vfil\eject

\vfil\eject

\section*{Appendix}

\subsection*{Definitions}

For $t \ge 1$, a {\em one-way $t$-tape nondeterministic finite automaton} ($t$-tape $\NFA$) is a tuple $M = (Q,\Sigma,\delta,q_0,F)$
where $Q$ is a finite set of states, $\Sigma$ is the finite input alphabet, $q_0 \in Q$ is the initial state, 
$F\subseteq Q$ is the set of final states, and $\delta$ is a partial function from
$Q \times (\Sigma \cup \{\lambda\}) \times \{i \mid 1 \le i \le t \}$ (for 1-tape machines, we unambiguously leave off the last component) to finite subsets of $Q$. We usually denote an element 
$q' \in \delta(q,a,i)$ by $\delta(q,a,i) \rightarrow q'$. 
A {\em configuration} of $M$ is a tuple $(q,(w_1, \ldots, w_t))$ where $q \in Q$ is the current state, 
and $(w_1, \ldots, w_t), w_1, \ldots, w_t \in \Sigma^*$  is the remainder of the $t$-tape input. Two configurations change as follows:
\begin{equation*}(q,(w_1, \ldots, w_{i-1}, aw_i, w_{i+1},\ldots, w_t)) \vdash (q',(w_1, \ldots, w_t)),\end{equation*}
if there is a transition $\delta(q,a,i) \rightarrow q'$. 
We let $\vdash^*$ be the reflexive and transitive closure of $\vdash$.
An {\em accepting computation} on $(w_1, \ldots , w_t) \in \langle \Sigma^* \rangle^t$ is a sequence
\begin{equation} (q_0, (w_1, \ldots, w_t)) \vdash \cdots \vdash (q_n, (\lambda,\ldots,\lambda)), \label{computation} \end{equation} 
where $q_n \in F$. 
The {\em language accepted} by $M$, $L(M) \subseteq \langle \Sigma^* \rangle^t$ is the set of all $(w_1,\ldots,w_t)$ for which there is an accepting computation. 

A $t$-tape
$k$-counter machine is a tuple $M = (Q,\Sigma,\delta,q_0,F)$, where $Q,\Sigma,q_0,F$ are just like $t$-tape $\NFA$,
and $\delta$ has transitions
$\delta(q,a,i,s,j) \rightarrow (q',x)$ where $q,q' \in Q, a\in \Sigma \cup\{\lambda\}, 1 \le i \le t,  s \in \{0,1\}, x \in \{-1,0,1\}, 1 \le j \le k$.
A configuration is a tuple $(q,(w_1,\ldots,w_t), (z_1, \ldots, z_k))$ where $q \in Q$ is the current state, $(w_1, \ldots,w_t)$ is the remaining contents of the input tapes, and $(z_1, \ldots, z_k)$ are the contents of the counters, where $z_i \in \mathbb{N}_0$, for each $i$. Configurations change by 
\begin{eqnarray*}&&(q,(w_1, \ldots, w_{i-1}, aw_i, w_{i+1},\ldots, w_t), (z_1,\ldots, z_k)) \vdash^{\alpha} \\ && \hphantom{yyyyyyyyyyyyyyyyyyyy} (q',(w_1, \ldots, w_t), (z_1, \ldots, z_{j-1}, z_j+x, z_{j+1}, \ldots, z_k)),\end{eqnarray*} if $\alpha$ is $\delta(q,a,i,s,j) \rightarrow (q',x)$, $s$ is $0$ if $z_j = 0$, and $s$ is 1 if $z_j$ is positive.
As such, $s$ is known as the counter status as it is used to check if a counter is empty or not. A $k$-counter machine $M$ is $r$-reversal-bounded (resp.\ reversal-bounded) if in each accepting computation, the
number of changes between non-decreasing and
non-increasing (or vice versa) on each counter is at most $r$ (resp.\ a finite number).
A $k$-counter machine is {\em partially-blind} if $\delta(q,a,i,0,j)=  \delta(q,a,i,1,j)$ for each $q \in Q, a \in \Sigma \cup \{\lambda\},
1 \le i \le t, 1 \le j \le k$. For this reason, typically the counter status component is left off the transitions.

We also examine one-way machines with a two-way (Turing) read/write worktape denoted by $\NTM$. 
These machines (for this model, we only use 1-tape inputs) have the same components as 
1-tape $\NFA$, but also have a worktape alphabet $\Gamma$, (including a fixed blank character $\blank$), 
and $\delta$ is 
from  $Q \times (\Sigma \cup \{\lambda\}) \times \Gamma$ to subsets of
$Q \times \Gamma \times \{{\rm L},{\rm S}, {\rm R}\}$. Each transition $\delta(q,a,y) \rightarrow (q',z,x)$ consists of, the current state $q \in Q$, the state to switch to $q' \in Q$, the input $a \in \Sigma \cup \{\lambda\}$,  the symbol currently being scanned on the worktape $y \in \Gamma$, the symbol to replace it with $z \in \Gamma$, and the direction $x$ (left, stay, or right) moved by the read/write head. 

An $\NTM$ (resp.\ $\NPDA$)  $M$ is {\em $l$-turn} if, in every accepting computation, the worktape makes at most $l$ changes in
direction, between moving towards
the right and moving towards the left, and vice versa. 
A machine is {\em finite-turn} if it is $l$-turn for some $l$.

Let $M$ be a $t$-turn $\NTM$ (resp.\ $t$-turn $\NTCM$), where $t \ge 0$. 
We say $M$ is in {\em normal form} if:
$M$ makes exactly $t$ turns on all inputs accepted;
the read/write worktape head always moves left or right at every step that uses the worktape (no stay transitions);
on every accepting computation, there is a worktape cell $d$, and $M$ only turns left on cell $d$ and right on cell $1$ (the cell it starts on);
the worktape never moves left of cell $1$ or right of cell $d$; and
$M$ accepts only in cell $1$ or $d$.

\medskip
\noindent {\bf Normal Form Lemma.}
Let $t \geq 0$. Given a $t$-turn $\NTM$ (resp.\ $t$-turn $\NTCM$) $M$, we can construct
a $t$-turn $\NTM$  (resp.\ $t$-turn $\NTCM$) $M'$ in normal form such that $L(M') = L(M)$.
\begin{proof}
First, we  assume without loss of generality that $M$ starts by moving towards the right (if it does not,
then another machine can be built which uses the worktape in the opposite direction).

Next, we introduce three new worktape symbols:  $\rhd, \#$, and $\lhd$ (plus a marker that can be added to any letter). $M'$ starts by writing $\rhd$ on the first cell. Throughout, if $M$ has counters, then transitions that use them are simulated verbatim. Then $M'$ simulates $M$ before the first turn, whereby at each step, $M'$ can either simulate a transition verbatim that moves right. It can simulate a sequence of stay transitions before moving right as follows: if the symbol on the worktape cell is $x$, it guesses the final contents of the cell before eventually moving right, $y$, and replaces $x$ with $y$, but by immediately moving right. It then simulates the sequence of stay transitions appropriately from $x$ to $y$ but using the state to store the current simulated symbol and by moving right on $\#$ on each step, ultimately verifying that the simulated sequence of stay transitions ends with $y$. Also, at each step, instead of simulating a transition of $M$, it can instead nondeterministically write any number of $\#$ symbols and move right on the store (reading $\lambda$ on the input). At some point, $M'$
writes a $\lhd$ on the worktape (this is the guessed rightmost cell $d$ to be visited during the entire computation). Next, it continues the simulation but only towards the left by only simulating
transitions that move left verbatim (only on a non-$\#$ symbol from the worktape), or that stay on the worktape in a similar fashion as above starting on $x$, guessing the final contents $y$, replacing $x$ with $y$, then simulating the sequence appropriately using the state, by moving left on $\#$ at each step. It can also skip over arbitrarily many $\#$ symbols on $\lambda$ input. If it simulates a turn transition, it instead marks the current cell, moves left to $\rhd$ and back to the marked cell, where it unmarks it and continues the simulation. If $t\ge 2$, it again switches direction and continues
this same simulation towards the right in a similar fashion, and so on.
$M'$ remembers how many turns have occurred in the finite control to make sure it turns exactly $t$ times. Lastly, if $M'$
hits a final state of $M$, it remembers this in the finite control and continues making a full $t+1$ sweeps of the 
worktape whence it enters a final state of $M'$.

Altogether, $M'$ can shuffle in arbitrarily many $\#$ symbols into the worktape, it writes $\rhd$ on the first cell,
$\lhd$ on some cell $d$ (nondeterministically guessed so that $M$ would not move to the right of that cell on any turn), it only turns at those two designated cells, and always turns at those two cells, there are no
stay transitions, and it accepts only on either $\rhd$ or $\lhd$. It is evident that $L(M') = L(M)$ and $M'$ is in normal form.
\qed \end{proof}

Also, given such a machine $M$ in normal form, we can have the
machine write the current state in the first and last cell (1 and $d$) every time it reaches them. We call this
{\em state normal form}.

\setcounter{theorem}{1}
\begin{lemma} 
Let $t \ge 0$, and let $M$ be a $t$-turn $\NTM$  (resp.\ $t$-turn $\NTCM$). We can construct a $(t+1)$-tape $\NFA$  (resp.\ $(t+1)$-tape $\NCM$) $M'$ such that
$L(M' )^A =  L(M).$
\end{lemma}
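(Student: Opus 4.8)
The plan is to reduce to the case where $M$ is in state normal form via the normal-form construction (Appendix), and then to build $M'$ as a device that scans the worktape of $M$ a single time — from cell $1$ rightward — while guessing, cell by cell, a history string $x \in \Delta^*$ and simulating all $t+1$ sweeps of $M$ in parallel. I would describe only the case $t$ odd (the even case is symmetric), so that $(w_1,\ldots,w_{t+1})^A = w_1 w_2^R \cdots w_t w_{t+1}^R$. Recall that each letter of $x$ is a $(t+1)$-tuple $(b_1,\ldots,b_{t+1})$, where $b_i$ is the content of the current worktape cell after the $i$th sweep, so that $h_i(x)$ is the entire worktape content following sweep $i$. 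Since sweep $i+1$ reads the tape produced by sweep $i$, the symbol read at a given cell during sweep $i+1$ and the symbol written there during sweep $i$ both appear in the same letter of $x$, hence the inter-sweep compatibility can be checked locally, one letter at a time; this is precisely the assertion that the guessed $x$ lies in $H(M)$.

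First I would give $M'$ a finite-control component recording, for each $i$ with $1 \le i \le t+1$, the simulated state $p_i$ of $M$ on sweep $i$ at the current cell boundary. At each step $M'$ guesses the next letter $(b_1,\ldots,b_{t+1})$ and advances every sweep across that cell simultaneously. For odd $i$ the sweep runs left to right, so $M'$ applies a rightward transition of $M$ that reads $b_{i-1}$ (or $\blank$ when $i=1$) from the tape produced by the previous sweep, writes $b_i$, and consumes the dictated symbol $a \in \Sigma \cup \{\lambda\}$ from input tape $i$, updating $p_i$ accordingly. For even $i$ the sweep runs right to left, and since $M'$ visits cells in increasing order it must simulate that sweep backwards: to account for a transition $\delta(q,a,y) \to (p,z,\mathrm{L})$ it moves its record for track $i$ from $p$ to $q$ while reading $a$ from input tape $i$ and checking that the current letter carries $y$ on track $i-1$ and $z$ on track $i$. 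The point of this reversal is that sweep $i$ of $M$ consumes the input block $w_i^R$; simulating it backwards consumes the same symbols in the opposite order, so $M'$ reads exactly $w_i$ from tape $i$. Hence tape $i$ always holds $w_i$, while the string actually scanned by $M$ is $w_1 w_2^R \cdots w_t w_{t+1}^R = (w_1,\ldots,w_{t+1})^A$.

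Next I would close off the boundary bookkeeping dictated by state normal form: $M'$ initializes the sweep-$1$ record to $q_0$, and because the leftmost and rightmost cells carry the turn states, $M'$ can verify that the state ending sweep $i$ equals the state beginning sweep $i+1$ (they agree at the shared turn cell) and that the final sweep ends in an accepting state. With these checks in place, an accepting run of $M'$ on $(w_1,\ldots,w_{t+1})$ guessing $x$ is in bijection with an accepting computation of $M$ on $(w_1,\ldots,w_{t+1})^A$ whose $i$th worktape content is $h_i(x)$, which yields $L(M')^A = L(M)$. For the counter version I would let $M$ use monotonic counters and have $M'$ copy every counter operation verbatim at the moment its originating transition is simulated; since monotonic counters are insensitive to the order in which increments are applied, the parallel and reordered simulation preserves each counter total, the acceptance condition is unaffected, and $M'$ is a $(t+1)$-tape $\NCM$.

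The step I expect to be the main obstacle is making the reverse simulation of the even sweeps mesh correctly with the forward simulation of the odd sweeps in a single left-to-right pass: one must ensure that the boundary states at each turn are consistent across adjacent sweeps and that each input tape is consumed in exact synchrony with the (possibly reversed) sweep that reads it, so that the per-cell compatibility checks genuinely certify membership of $x$ in $H(M)$ rather than some weaker condition.
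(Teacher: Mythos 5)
Your proposal is correct and follows essentially the same route as the paper's proof: reduce to state normal form, guess the $(t+1)$-track history string letter-by-letter in a single left-to-right pass, simulate odd sweeps forwards and even sweeps backwards (replacing a leftward transition $\delta(q,a,y)\to(p,z,\mathrm{L})$ by a state change from $p$ to $q$ while reading $a$ from tape $i$ and checking tracks $i-1$ and $i$ locally), and handle counters verbatim using order-insensitivity of monotonic counters. Your additional explicit points --- that the turn states recorded on the boundary cells certify inter-sweep state consistency, and that backward simulation of sweep $i$ makes tape $i$ hold $w_i$ rather than $w_i^R$ --- are exactly the details the paper leaves implicit, so nothing is missing or divergent.
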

\begin{proof}
We will only describe the case when $k$ is odd, with the even case being similar.

Assume without loss of generality by the previous lemma that $M$ is in state normal form.
We construct $M'$ which accepts input $(w_1, \ldots, w_{t+1})$ if and only if $w_1 w_2^R \cdots w_t w_{t+1}^R \in L(M)$, as follows:

On input $(w_1, \ldots, w_{t+1})$, $M'$ guesses a $(t+1)$-track string $x \in \Delta^*$ letter-by-letter (here, $x$ does not need to be stored as it is guessed one letter at a time from left-to-right), and simulates the computation of $M$ on the $t+1$ input
tapes by making sure that the computation of $M$ is ``compatible'' with the guessed string $x$ while checking that $x \in H (M)$. 
To do this, $M'$ verifies that on input $w_1$, $M$ could read $w_1$ before the first turn and finish its first sweep with $h_1(x)$ on its tape, on input $x_2^R$ and starting with state and tape contents of $h_1(x)$, $M$ could finish its second sweep with $h_2(x)$ on its tape, etc. Furthermore, on the guessed $x$, it will be possible to verify in parallel that, for each $i$, $1 \le i \le t$, $h_i(x)$ produced $h_{i+1}(x)$ while reading $w_i$ if $i$ is odd, or $w_i^R$ if $i$ is even. Also, it is possible to do so from left-to-right when guessing $x$, even when $i$ is even. We will describe the even case which is slightly more complicated. The 
simulation using the even tracks are done in reverse by ``flipping the directions''. It only needs to simulate transitions that move left. To simulate $\delta(q,a,y) \rightarrow (p, z, {\rm L})$ on track $i$, $M'$ switches the simulation of track $i$ from $p$ to $q$ while reading $a$ from tape $i$, and verifying that the $(t+1)$-track string $x$ has $y$ in the current letter of track $i-1$ (or $\blank$ if $i=1$), and $z$ in the current letter of track $i$.
If $M$ has monotonic  counters, then $M'$ simulates them verbatim, as applying transitions in a different order preserves their total.
\qed \end{proof}

\begin{lemma} 
Let $t \geq 0$, and let $M$ be a $(t+1)$-tape $\NFA$ (resp.\ $(t+1)$-tape $\NCM$). Then we can construct a $t$-turn $\NTM$  (resp.\ $t$-turn $\NTCM$) $M'$ such that
$L(M' ) = L(M)^A.$
\end{lemma}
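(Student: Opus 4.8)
The plan is to have the $t$-turn $\NTM$ $M'$ first \emph{guess and record} an entire accepting transition sequence of the multi-tape machine $M$ on its worktape, and then spend its $t$ allowed turns verifying, one tape at a time, that the single input string really does decompose as the alternating pattern of some tuple accepted by that sequence. I would treat the case $t$ odd (so $t+1$ is even and the pattern ends in $w_{t+1}^R$); the case $t$ even is symmetric, differing only in whether the final sweep reads a segment or its reverse. The guiding idea is that \emph{once the transition sequence is fixed, the letters read on each tape are completely determined}, so the per-tape checks decouple and can be carried out in separate sweeps.

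First I would fix a label alphabet $T$ in bijection with the transitions $\delta(q,a,i)\rightarrow p$ of $M$. On an input $w = w_1 w_2^R \cdots w_t w_{t+1}^R$ (where the $w_i$ may be empty, since $M$ may take $\lambda$-moves), $M'$ on its first left-to-right sweep writes a string $\alpha_0 \alpha_1 \cdots \alpha_n \in T^*$ on the worktape, one label per cell, checking on the fly that $\alpha_0$ leaves an initial state, that the target state of each $\alpha_j$ is the source state of $\alpha_{j+1}$, and that $\alpha_n$ enters a final state. In parallel, whenever the label $\alpha_j$ being written reads from tape $1$, $M'$ consumes the corresponding letter from its input (and consumes nothing when $\alpha_j$ reads another tape), thereby checking that the tape-$1$ letters prescribed by $\alpha_0\cdots\alpha_n$ spell exactly $w_1$. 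Since the worktape is unbounded, an arbitrary-length sequence fits, and once it is written it is fixed.

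Next, $M'$ uses its turns to sweep the worktape alternately. On the second sweep (right-to-left) it reads input $w_2^R$ and, for each label encountered in the reversed order $\alpha_n \cdots \alpha_0$ that reads from tape $2$, consumes one input letter; this verifies that the tape-$2$ letters read by $\alpha_0\cdots\alpha_n$, taken in reverse, equal $w_2^R$, i.e.\ that tape $2$ reads $w_2$. The third sweep (left-to-right) checks tape $3$ against $w_3$, and so on, so that sweep $i$ verifies tape $i$, consuming $w_i$ when $i$ is odd and $w_i^R$ when $i$ is even. After the $(t+1)$st sweep the whole input has been consumed and $M'$ accepts. Since there are $t+1$ sweeps, the worktape makes exactly $t$ turns, as required. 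For the counter versions, the guessed labels already encode all monotonic-counter increments, and $M'$ applies every increment during the first sweep while writing the sequence; because the counters are monotonic this reordering leaves the final totals unchanged, so the matching $C_i$/$D_i$ equalities still hold at acceptance and $M'$ is a $t$-turn $\NTCM$.

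The only genuinely delicate point is the reversal alignment in the even sweeps: I must check that scanning the fixed label sequence right-to-left while simultaneously reading the reversed input segment $w_i^R$ matches letter-for-letter. This holds because if the tape-$i$ transitions in $\alpha_0\cdots\alpha_n$ read $b_1 b_2 \cdots b_m$ in forward order (so tape $i$ receives $w_i = b_1\cdots b_m$), then traversing the sequence backwards encounters the tape-$i$ transitions reading $b_m, \ldots, b_1$, which is precisely $w_i^R$; hence the segment $w_i^R$ appearing in $w$ is consumed correctly. Verifying this alignment, confirming that the sweep structure uses exactly $t$ turns, and ensuring the initial/final and state-matching conditions are enforced exactly once constitute the bulk of the routine work; everything else follows from the decoupling afforded by fixing the transition sequence before any per-tape verification begins.
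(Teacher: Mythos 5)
Your proposal is correct and follows essentially the same route as the paper's proof: guess and write the accepting transition sequence of $M$ on the worktape during the first sweep while verifying tape $1$, then use each subsequent sweep (alternating direction, hence exactly $t$ turns) to check tape $i$ against $w_i$ or $w_i^R$, relying on the fact that the fixed label sequence determines the letters read on every tape, with the reversal alignment argument for even sweeps matching the paper's. The counter handling — applying all monotonic-counter increments during the guessing sweep and invoking order-independence of the totals — is also the same as in the paper.
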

\begin{proof}
We will describe the case when $t$ is odd, with the even case being similar. First consider the case without counters.

Each transition of the $(t+1)$-tape $\NFA$ is of the form
$\delta(q, a,i) \rightarrow p$,
where $q,p$ are states, $a \in \Sigma \cup \{\lambda\}$ and $1 \le i \le t+1$.
Let $T$ be a set of labels in bijective correspondence with the transitions of $M$. 

On input $w = w_1 w_2^R  \cdots w_t w_{t+1}^R$ (note that the
$w_i$'s need not have the same lengths and can even be the empty word since some transitions are
on $\lambda$ input), $M'$ operates as follows:
\begin{enumerate}
\item $M'$ uses the worktape to guess a sequence of transition labels of $M$. So $M'$ writes $\alpha_0 \alpha_1 \cdots \alpha_n$ (each $\alpha_i \in T$),
where it verifies that $\alpha_0$ is a transition from an initial state of $M$, $\alpha_n$ is a transition into a final state, and the ending state of $\alpha_i$ is the starting state of $\alpha_{i+1}$ for all $i$, $0 \leq i < n$. 
In parallel, $M'$ reads input $w_1$ and verifies that the letters of $\Sigma \cup \{\lambda\}$ on input tape 1 that are read by the transition sequence $\alpha_0 \cdots \alpha_n$ are $w_1$.
\item $M'$ turns on the worktape and reads input $w_2^R$ and makes sure the letters read by $\alpha_n \cdots \alpha_0$ on tape 2 are $w_2^R$.
\item $M$ turns on the worktape and reads $w_3$ and makes sure the letters read by $\alpha_0 \cdots \alpha_n$ on tape 3 are $w_3$.
\item[$\vdots$]
\item[] until tape $t+1$.
\end{enumerate}

Because the transition sequence $\alpha_0 \cdots \alpha_n$ is fixed after step 1, $M'$ can verify that $(w_1, \ldots, w_{t+1})$ could
be read by $\alpha_0 \cdots \alpha_n$ by making a turn on the store after reading each $w_i$.

If $M$ has counters, then the guessed sequence of transition labels remains the same, and all counter changes can be applied when guessing it.
\qed \end{proof}

\setcounter{theorem}{8}
\begin{proposition}
Let $t \ge 0$. There is a $(t+1)$-tape $\NPCM$ $M$
if and only if there is an $i$-restricted $t$-turn $\NTPCM$ $M'$ 
such that $L(M' ) = L(M)^A$, for any $0 \le i \le t$.
\end{proposition}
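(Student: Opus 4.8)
The statement is a two-directional simulation for a fixed but arbitrary $i$ with $0 \le i \le t$, so the plan is to prove each direction by overlaying a single pushdown onto the pushdown-free constructions of Lemmas~\ref{NTMtoNFA} and~\ref{NFAtoNTM}. The organizing observation is that an $i$-pd-restricted machine touches its pushdown only during sweep~$i$ of the worktape, which under the track/history encoding is exactly one of the $t+1$ sweeps and hence corresponds to a single input tape of the multi-tape machine; confining the one pushdown to that single track is precisely what matches the pd-restriction on each side. Monotonic counters play no role in the difficulty, since their order of application is irrelevant, and they are carried along exactly as in the two base lemmas.

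For the direction from an $i$-pd-restricted $t$-turn $\NTPCM$ $M'$ to a $(t+1)$-tape $\NPCM$ $M$, I would put $M'$ in state normal form and replay the proof of Lemma~\ref{NTMtoNFA}: $M$ guesses the $(t+1)$-track history string letter-by-letter and checks, for each sweep, that the corresponding tape's input is consumed by the simulated moves of that sweep (read forward on left-to-right sweeps, in reverse on right-to-left sweeps). The only new ingredient is that, while verifying sweep~$i$, $M$ replays $M'$'s push/pop operations on its own single pushdown; since $M'$ uses the pushdown nowhere else, $M$ uses its pushdown only on that one tape. Counters are copied verbatim, exactly as in Lemma~\ref{NTMtoNFA}.

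For the converse, from a $(t+1)$-tape $\NPCM$ $M$ to an $i$-pd-restricted $t$-turn $\NTPCM$ $M'$, I would first invoke Lemma~\ref{rest} to assume that $M$ uses its pushdown only on the tape corresponding to sweep~$i$. Then I would follow Lemma~\ref{NFAtoNTM}: $M'$ writes a guessed transition sequence $\alpha_0\cdots\alpha_n$ of $M$ on its worktape during the first sweep (applying all counter increments there), and re-reads this sequence on the remaining sweeps to check each tape's input against the alternating pattern $w_1 w_2^R \cdots$. The extra step is to simulate $M$'s pushdown during sweep~$i$ and nowhere else, as the relevant transitions are re-read from the worktape; this is exactly what makes $M'$ $i$-pd-restricted, and it is where the freedom ``for any $0 \le i \le t$'' is used.

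The one genuinely non-routine point, and the step I expect to be the main obstacle, is the behaviour of the pushdown on the right-to-left sweeps. There the sweep's transitions are encountered in reverse order, so the pushdown must be simulated backwards, interchanging pushes and pops. This is harmless for the monotonic counters but delicate for a stack, since a faithful reversed run is valid only when the stack is empty at both ends of the reversed segment. I would dispose of this with an empty-stack normalization: because the pushdown is untouched outside sweep~$i$ and acceptance is by final state, I may assume the pushdown equals $Z_0$ both entering and leaving sweep~$i$ (emptying it at the end of its use via stay moves, which the Normal Form Lemma subsequently absorbs into moves over the $\#$-cells). With empty boundaries, swapping pushes and pops yields a genuine pushdown computation, so the reverse simulation is valid in both directions and the identity $L(M') = L(M)^A$ follows from the two constructions.
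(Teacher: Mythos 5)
Your proposal follows essentially the same route as the paper's proof: both directions overlay the single pushdown onto the constructions of Lemmas~\ref{NTMtoNFA} and~\ref{NFAtoNTM}, invoking Lemma~\ref{rest} in the converse direction so that the pushdown is confined to the one tape corresponding to sweep~$i$, with counters carried along verbatim. Your empty-stack normalization for the reversed (right-to-left) sweep is a correct and more explicit treatment of the point the paper disposes of with only the parenthetical remark that ``the simulation of the pushdown is done in reverse if $i$ is odd'' (an equally valid alternative, not requiring normalization, is to nondeterministically push a guessed terminal stack content and verify the reversed run ends at $Z_0$), so your argument is sound and matches the paper's approach.
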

\begin{proof}
The construction from $i$-restricted $t$-turn $\NTPCM$ to  ($i$-restricted) $(t+1)$-tape $\NPCM$ is similar to the
 proof of Lemma \ref{NTMtoNFA} noting
that since the pushdown in the $t$-turn $\NTPCM$ is only used 
between two consecutive turns (or the start or end of the computation) of the worktape head, the simulation of the pushdown is only done on a single track of the $(t+1)$-track guessed string when reading that one input tape.
(Note that the simulation of the pushdown is done in reverse if $i$ is odd).

For the reverse construction, by Lemma \ref{rest}, given a ($t+1$)-tape $\NPCM$, we can convert to an equivalent
$i$-restricted ($t+1$)-tape $\NPCM$. To convert that to an
$i$-restricted $t$-turn $\NTPCM$ ($0 \le i \le t$) is similar to the one in the
proof of Lemma \ref{NFAtoNTM} where it only verified that the pushdown changes properly 
according to the guessed transition sequence by simulating the pushdown only between the $i$th and $i+1$st turns of the Turing tape (or the start or the end).
\qed \end{proof}

\setcounter{theorem}{11}
\begin{proposition}
The emptiness (boundedness, infiniteness) problems are undecidable
for the following models:
\begin{enumerate}
\item
1-turn $\NTM$ (or $\DCSA$) with a 1-turn pushdown. 
\item
2-turn $\DCSA$ with a 1-turn pushdown, even when the pushdown is used
only during the checking stack reading phase (i.e., after turn 1).
\item 1-turn $\NTM$ with an unrestricted counter.
\item 1-turn deterministic pushdown automata with an unrestricted counter.

\end{enumerate}
\end{proposition}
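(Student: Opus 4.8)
The plan is to reduce every case to undecidability of \emph{non-emptiness}, from which boundedness and infiniteness then follow uniformly: each of the four models is closed under right concatenation with $\$\Sigma^*$ for a fresh symbol $\$$, where the suffix is read entirely in the finite control and hence adds no store activity and no worktape turn. Since $L\,\$\,\Sigma^*$ is bounded exactly when $L=\emptyset$ and infinite exactly when $L\neq\emptyset$ (taking $|\Sigma|\ge 2$), undecidability of non-emptiness yields the other two. For case (1) I would reduce from the undecidability of whether $L(M_1)\cap L(M_2)=\emptyset$ for two $1$-turn $\DPDA$ $M_1,M_2$ \cite{Baker1974}. A $1$-turn worktape simulates a $1$-turn pushdown verbatim (a leftward head move realises a pop), so a $1$-turn $\NTM$ with a $1$-turn pushdown carries two independent $1$-turn pushdown stores; running $M_1$ on the worktape and $M_2$ on the pushdown in lock-step on a single input gives a machine accepting $L(M_1)\cap L(M_2)$, whose emptiness is undecidable. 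For the $\DCSA$ variant one stores $M_1$'s stack write-only during the first sweep and reads it back during the second, again realising a $1$-turn pushdown.

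For case (2) I would reduce from the halting problem of a single-tape $\DTM$ $Z$ on blank input \cite{HU}. Encoding a halting run $I_1\Rightarrow I_2\Rightarrow\cdots\Rightarrow I_{2k}$ as $I_1\#I_3\#\cdots\#I_{2k-1}\,\$\,I_{2k}^R\#\cdots\#I_4^R\#I_2^R$, the $2$-turn $\DCSA$ with a $1$-turn pushdown works in three sweeps: (i) copy the odd configurations onto the checking stack; (ii) after the first turn, scan the reversed even configurations off the input, matching each against the adjacent odd configuration on the stack to check $I_{2j-1}\Rightarrow I_{2j}$, while pushing the even configurations onto the pushdown; (iii) after the second turn, re-scan the stack and pop the pushdown to check the remaining transitions $I_{2j}\Rightarrow I_{2j+1}$. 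The pushdown is touched only in sweeps (ii)--(iii), i.e.\ after the first checking-stack turn, and $L(M)\neq\emptyset$ iff $Z$ halts. I expect this case to be the main obstacle, since one must line up the reversals so that every even configuration is verified against its predecessor during sweep (ii) and against its successor during sweep (iii), keep the machine deterministic, and confine the pushdown to the post-turn reading phase.

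For case (3) I would reduce from the halting problem of a $2$-counter (Minsky) machine $M$ on empty input \cite{Minsky}, exploiting its phase structure: a halting run visits $(q_1,d_1,0),(q_2,0,e_1),(q_3,d_2,0),\ldots$ with $d_1=1$, one counter emptying while the other fills in each phase. The $1$-turn $\NTM$ with a single unrestricted counter $D$ first makes one rightward sweep writing $a^{d_k}\#b^{d_k}\#\cdots\#a^{d_1}\#b^{d_1}$, guessing $k$ and the $d_i$ and using $D$ to force each $a$-block and the following $b$-block to have equal length (increment $D$ while writing $a^{d_i}$, then decrement it to zero while writing $b^{d_i}$). After its single turn it sweeps leftward simulating $M$, tracking $C_1$ by worktape positions and $C_2$ by $D$, and checking at each phase boundary that the guessed block matches. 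Thus $\lambda$ is accepted iff $M$ halts.

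For case (4) I would use case (3) together with the fact that a $1$-turn worktape is a $1$-turn pushdown, so case (3) already gives undecidable $\lambda$-acceptance (hence non-emptiness) for a nondeterministic $1$-turn $\NPDA$ $N$ with an unrestricted counter. To obtain a deterministic device I would let $M'$ read, over the alphabet $T$ of $N$'s transition labels, a candidate transition sequence and deterministically verify that it spells an accepting computation of $N$ on $\lambda$ input, maintaining $N$'s $1$-turn pushdown and unrestricted counter as it goes. Then $L(M')\neq\emptyset$ iff $N$ accepts $\lambda$, which is undecidable, completing all four cases.
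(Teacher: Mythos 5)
Your proposal is correct and follows the paper's own proof essentially step for step: the same reduction of boundedness and infiniteness to emptiness, the Baker--Book intersection argument for case (1), the identical DTM-halting encoding with the three-sweep checking-stack/pushdown construction for case (2), the identical Minsky two-counter phase simulation on a one-turn worktape plus unrestricted counter for case (3), and the identical transition-sequence determinization of case (3) to obtain case (4). There are no gaps beyond those the paper itself glosses over (e.g., handling $\lambda$-moves in the deterministic lock-step simulations), so nothing further is needed.
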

\begin{proof}
It was pointed out in the introduction that undecidability of emptiness implies undecidability of boundedness. 
Similarly, given $M$, $L(M)\Sigma^*$ is infinite if and only if $L(M) \neq \emptyset$, and hence undecidability
of emptiness implies undecidability of infiniteness.

The first item above follows from the undecidability of emptiness
of  the intersection of languages
accepted by 1-turn $\DPDA$ (deterministic $\NPDA$) \cite{Baker1974}.

For the second item, we will use the undecidability of the halting
problem for single-tape $\DTM$ on an initially blank tape.
Let $Z$ be a single-tape $\DTM$. Define the following language:

$L = \{I_1 \# I_3 \cdots  \# I_{2k-1} \$ I_{2k}^R \# \cdots \# I_4^R \# I_2^R ~|~
I_1 \Rightarrow \cdots \Rightarrow I_{2k-1} \Rightarrow I_{2k}$ is a halting computation  of $Z \}$.

Construct a 2-turn $\DCSA$ with a 1-turn pushdown $M$ as follows when given 
an input of the form
$w = I_1 \# I_3 \cdots \#I_{2k-1} \$ I_{2k}^R \# \cdots \# I_4^R \# I_2^R$:

\begin{enumerate}
\item
$M'$ writes $I_1 \# I_3 \cdots \#I_{2k-1}$ on the checking stack.
\item
$M'$ turns on the checking stack, and while reading input
$I_{2k}^R \# \cdots \# I_4^R \# I_2^R$ does the following in parallel:

-- It checks that $I_1 \Rightarrow I_2, I_3 \Rightarrow I_4, \ldots, I_{2k-1} \Rightarrow I_{2k}$.

-- It pushes $I_{2k}^R \# \cdots \# I_4^R \# I_2^R$  on the pushdown.
\item

$M'$ then makes a second turn on the checking stack and checks
  (by popping the pushdown and scanning the checking stack) that 
   $I_2 \Rightarrow I_3, \ldots, I_{2k-2} \Rightarrow I_{2k-1}$.
\end{enumerate}
$M$ makes only 2 turns on the checking stack and 1-turn on 
the pushdown, and $L(M)$ is empty if and only if $Z$ does not halt.

For the third point, we show that it is undecidable whether a machine of this type accepts $\lambda$.
The proof of the undecidability of the halting problem for 2-counter machine
with counters $C_1$ and $C_2$  (with no input tape) in \cite{Minsky}  shows that
the counters operate in phases.  A phase begins with one counter,  say
$C_1$,  having
value  $d_i$ and the other counter, $C_2$,  having value $0$.  During the phase, $C_1$
decreases while $C_2$ increases. The phase ends with $C_1$ having value 0 and
$C_2$ having value $e_i$. Then in the next phase the modes of the counters
are interchanged: $C_2$ decreases to zero while $C_1$ increases to $d_{i+1}$.
At the start, $d_1 = 1$. Thus, a halting computation of $M$ (if it halts)
will be of the form:

\begin{center}
$(q_1, d_1,  0) \Rightarrow^* (q_2, 0, e_1) \Rightarrow^* (q_3, d_2, 0) \Rightarrow^* (q_4, 0, e_2)
\Rightarrow^* \cdots \Rightarrow^*(q_{2k}, 0, e_k)$
\end{center}

\noindent
where $q_1, \ldots, q_{2k}$ are states and $d_1, e_1, d_2, e_2, ..., d_{2k}, e_{2k}$
are positive integers with $d_1 = 1$ and the shown configurations are the ends of the phases. Note that the second component of the configuration
refers to the value of $C_1$, while the third component refers to the value of $C_2$.  We assume that if $M$ halts, it halts with zero in counter $C_1$.

We construct a 1-turn $\NTM$ $M'$ with an unrestricted counter $D$ to simulate the
2-counter machine $M$ on $\lambda$ input as follows:
\begin{enumerate}
\item
$M'$ writes $z = a^{d_k}\# b^{d_k}\# \cdots \# a^{d_3}\# b^{d_3}\#  a^{d_2}\# b^{d_2}\# a^{d_1}\# b^{d_1}$
on its read/write tape, where  $d_1 = 1$ and $k, d_k, \ldots , d_3, d_2$ are nondeterministically chosen  positive integers.
Clearly, $M'$ can do this without reversing on the read/write tape
with the help of $D$; When $M'$ writes $a^{d_i}$, it simultaneously
increments $D$ by $d_i$, and then it decrements $D$ to zero while
writing $\#b^{d_i}$.
\item
$M'$ then reverses its read/write head and simulates $M$. In the
simulation, the 1-turn read/write tape will keep track of 
the changes in counter $C_1$ and $D$ will simulate $C_2$.
The simulation is done as follows:  Suppose counter $C_1$
has value $d_i$ represented by $a^{d_i}$ and
the read/write head is on $\#$ to the left of $b^{d_i}$, and
counter $C_2$ is zero. $M'$ moves the
read/write head left to $\#$ simulating $M$ and incrementing 
$D$ to $e_i$.  This simulates the phase where $C_1$ which has
value $d_i$ is decremented to zero while $C_2$ is incremented
to $e_i$.  In the next phase, $M'$ simulating $M$ decrements $D$
to zero while moving the read/write head left of $b^{d_{i+1}}$
to the next $\#$ checking that $d_{i+1}$ is valid. 
At this point, $D$ is zero, and counter $C_1$ has value 
represented by $a^{d_{i+1}}$.  The process is then repeated.
\end{enumerate}
Clearly, $M'$ accepts $\lambda$ if and only if $M$ halts,
which is undecidable.

Given a 1-turn $\NPDA$ $M$ with an unrestricted counter
operating on $\lambda$ input,  let $T = \{t_1, \ldots, t_k\}$
be its set of transitions.  Construct a 1-turn $\DPDA$  $M'$
with an unrestricted counter with inputs in $T^+$ which
operates as follows when given input  $w  = a_1 \cdots a_n$
in $T^+$:  $M'$ checks that the
transition $a_1$ is applicable to the initial condition, i.e.,
the state is $q_0$, the stack symbol $Z_0$, and counter 0.
Then $M'$ tries to simulate $M$ guided by the
transitions in $w$.  $M'$ accepts $w$ if and only if 
the sequence of transitions $w$ leads to
$M$ to accept $\lambda$.  It follows that $L(M')$ is not empty
if and only if $M$ accepts $\lambda$.  The result follows
from the third point of this proposition.

\qed \end{proof}

\setcounter{theorem}{15}

\begin{proposition}
$L$ is generated by an $n$-$\SMG$ with $2k$ monotonic counters $G$ if
and only if there a an $n$-tape $\NPDA$ with $2k$ monotonic counters (which
is equivalent to an $n$-tape $\NPCM$) $M$ accepting $L' \subseteq \langle \Sigma^* \rangle^n$ such
that $L = \{ x_1 \cdots x_n \mid  (x_1, \ldots , x_n) \in L'\}$.
\end{proposition}
\begin{proof}
Suppose $L'$ is accepted
by an $n$-tape $\NPDA$ $M$ with input alphabet $\Sigma$ and $2k$ monotonic counters
$C_1,D_1, \ldots, C_k, D_k$.  Let $a_1, b_1, \ldots, a_k, b_k$ be new symbols.  We
construct an $n$-tape $\NPDA$ $M'$ with input alphabet
$\Delta = \Sigma \cup \{a_1,b_1,\ldots , a_k,b_k\}$.  $M'$ when given an input $w \in \Delta^*$,
simulates $M$ on $w$ but in a move, instead of incrementing  $C_r$ ($D_r$) by a
non-negative integer  $i_r$ ($j_r$), $M'$ reads $a_r^{i_r}$ ($b_r^{j_r}$) on input tape $1$.
By Proposition \ref{smnocounter}, since $M'$ is a $n$-tape $\NPDA$, we can construct an $n$-$\SMG$ $G'$
generating $L(M')$.  Next, we construct an $n$-$\SMG$ with $2k$ monotonic counters $G$ from $G'$ as follows:
If in a rule $R'$ of $G'$, the symbol $a_r$ ($b_r$) appears  $i_r$ ($j_r$) times, we create
a rule $R$ of $G$ by deleting  these  symbols and adding  $i_r$ ($j_r$) as increment
to counter $C_r$ ($D_r$).  Clearly, $L(G) = \{x_1 \cdots x_n \mid (x_1, \ldots, x_n) \in L\}$.

The converse is proved by reversing the construction above:  Given an
$n$-$\SMG$ with $2k$ monotonic counters $G$, we construct an $n$-$\SMG$ $G'$
with $2k$ new terminal symbols to simulate the increments $i_r$ ($j_r$)
to counter $C_r$ ($D_r$) by generating $a_r^{i_r}$ ($b_r^{j_r}$) on the
first components of the rules.  From $G'$ we then construct an $n$-tape $\NPDA$
$M'$.  Finally from $M$, we construct a $n$-tape $\NPDA$ with monotonic counters $M$.
\qed \end{proof}

\begin{proposition} The emptiness, infiniteness, and boundedness problems
for simple matrix grammars (resp.\ with monotonic counters) are decidable, and they are effectively semilinear.
\end{proposition}
\begin{proof}
Given an $n$-$\SMG$ with $2k$ monotonic counters, we construct an $n$-tape $\NPDA$ $M$ with monotonic
counters (which is equivalent to a $n$-tape $\NPCM$) such that
$L (G) = \{x_1 \cdots x_n  \mid (x_1, \ldots, x_n) \in L(M)\}$.  
Let $L_i = \{x_i  \mid  (x_1, \ldots, x_n) \in L(M)\}$.
We construct for each $i$, an $\NPCM$ $M_i$ accepting $L(M_i)$.  Then $L(G)$ is non-empty  (resp.\ finite, bounded) if and only if each $L(M_i)$ is non-empty (resp.\ finite, bounded).
The result follows since these problems are decidable for $\NPCM$. Semilinearity is also clear.
\qed \end{proof}

\setcounter{theorem}{17}

\begin{proposition}
The boundedness, emptiness, and infiniteness problems are decidable for finite-flip $\NPDA$ (resp.\ finite-flip $\NPCM$).
\end{proposition}
\begin{proof}
It suffices to prove it with counters.
We will prove by induction on $t \ge 0$, that every $t$-flip $\NPCM$ has a decidable boundedness problem. The base case when 
$t=0$ is true because every $0$-flip $\NPCM$ is in fact a normal $\NPCM$ which has a decidable boundedness problem \cite{Georg}.

Let $r \geq 0$, assume that every $r$-flip $\NPCM$ has a decidable boundedness problem, and let $M = (Q,\Sigma,\delta,q_0, F)$ be a $(r+1)$-flip $\NPCM$.
Assume without loss of generality that every flip transition of $M$ is on $\lambda$, and that 
every flip transition that can be used for the $i$th flip is from states in $P_i$ to $P_i'$ where these states are not used for any other
transitions that do not involve the $i$th flip.
Also assume that each state implies the topmost symbol of the stack (i.e.\ each transition guesses the  topmost stack symbol, and then in the next step verifies that it guessed correctly). Lastly, assume without loss of generality that in every accepting computation, $M$ makes exactly $r+1$ flips.
Then $L(M)$ equals
\begin{eqnarray*}
&& \{ w v \mid \begin{array}[t]{l} (q_0, w, Z_0, 0, \ldots, 0) \vdash^* (q_1, \lambda, Z_0 \gamma, i_1, \ldots, i_k) \mbox{~with~} r \mbox{~flips}, q_1 \in P_{r+1}, \\
(q_1,\lambda, Z_0\gamma, i_1, \ldots, i_k) \vdash (q_2,\lambda, Z_0 \gamma^R, i_1, \ldots, i_k) \mbox{~with one flip}, q_2 \in P_{r+1}' \\
\mbox{and~} (q_2, v,Z_0\gamma^R, i_1, \ldots, i_k) \vdash^* (q_3, \lambda, \gamma', i_1', \ldots, i_k') \mbox{~with no flips}, q_3 \in F \}  \end{array}
\end{eqnarray*}
Let $X$ be the set of all pairs $(w,v)$ in $L(M)$ above.

Consider $S_1 = S(M) \cap P_{r+1} \Gamma^*c_1^* \cdots c_k^*$ and $S_2 = S(M) \cap P_{r+1}' \Gamma^*c_1^* \cdots c_k^*$.  Because the store language of
every finite-flip $\NPCM$ is an $\NCM$ language \cite{IbarraMcQuillanVerification} and $\NCM$ is closed under intersection with regular languages, $S_1$ an $S_2$ are in $\NCM$.
Because also $\NCM$ is closed under reversal, we can build an $\NCM$ $M_1$ that accepts the reversal of $S_1$, and $M_2$ can be built that accepts $S_2$ (not the reversal). Let $k_1, k_2$ be the number of counters in $M_1$ and $M_2$ respectively.

Let $L_1$ equal to
\begin{eqnarray*}
& = & \{w \mid \exists v, (w, v)  \in  X\}\\
&= & \{ w \mid \begin{array}[t]{l} \exists q_1,q_2 \in Q, q_3 \in F, \gamma, \gamma' \in \Gamma^*, v \in \Sigma^*, i_j, i_j' \ge 0 \mbox{~for~}1 \le j \le k \mbox{~such that~} \\ (q_0, w, Z_0,0, \ldots, 0) \vdash^* (q_1, \lambda, Z_0 \gamma, i_1, \ldots, i_k) \mbox{~with~} r \mbox{~flips}, q_1 \in P_{r+1},\\
(q_1,\lambda, Z_0\gamma, i_1, \ldots, i_k) \vdash (q_2,\lambda, Z_0 \gamma^R, i_1, \ldots, i_k) \mbox{~with one flip}, q_2 \in P_{r+1}'\\
\mbox{and~} (q_2, v,Z_0\gamma^R, i_1,\ldots, i_k) \vdash^* (q_3, \lambda, Z_0 \gamma', i_1', \ldots, i_k') \mbox{~with no flips} \} \end{array}\\
&=& \{w \mid \begin{array}[t]{l} (q_0,w,Z_0, 0, \ldots, 0) \vdash^* (q_1,\lambda, Z_0\gamma, i_1, \ldots, i_k) \mbox{~with~} r \mbox{~flips},\\ \mbox{and~} q_1 Z_0 \gamma c_1^{i_1} \cdots c_k^{i_k} \in S_1\}.\end{array}
\end{eqnarray*}
This is a $r$-flip language because a $r$-flip $\NPCM$ with $k+k_1$ counters can be built that simulates $M$ until an arbitrarily guessed state $q \in Q_1$ (using the first $k$ counters),
where, if it has $Z_0 \gamma$ on the pushdown and $i_1,\ldots, i_k$ in the counters, it simulates $M_1$ (on the other $k_1$ counters) on $(q Z_0\gamma c_1^{i_1} \cdots c_k^{i_k})^R$ by reducing each counter to zero from counter $k$ to counter $1$, then popping from the pushdown until it is empty.

Let $L_2$ equal to
\begin{eqnarray*}
& = & \{v \mid \exists w, (w, v)  \in  X\}\\
&= & \{ v \mid \begin{array}[t]{l} \exists q_1,q_2 \in Q, q_3 \in F,\gamma,\gamma' \in \Gamma^*,w \in \Sigma^*, i_j, i_j'  \ge 0 \mbox{~for~} 1 \le j \le k \mbox{~such that} \\
(q_0, w, Z_0, 0, \ldots, 0) \vdash^* (q_1, \lambda, Z_0 \gamma, i_1, \ldots, i_k) \mbox{~with~} r \mbox{~flips}, q_1 \in P_{r+1},\\
(q_1,\lambda, Z_0\gamma, i_1, \ldots, i_k) \vdash (q_2,\lambda, Z_0 \gamma^R, i_1, \ldots, i_k) \mbox{~with one flip}, q_2 \in P_{r+1}'\\ 
\mbox{~and~} (q_2, v,Z_0\gamma^R, i_1, \ldots, i_k) \vdash^* (q_3, \lambda, Z_0 \gamma', i_1',\ldots, i_k') \mbox{~with no flips} \} \end{array}\\
&=& \{v \mid \begin{array}[t]{l}(q_2,v,Z_0\gamma, i_1 \ldots, i_k) \vdash^* (q_3,\lambda, Z_0\gamma', i_1, \ldots, i_k') \mbox{~with~} 0 \mbox{~flips,~} q_3 \in F, \\ i_1', \ldots, i_k' \ge 0, \mbox{~and~} q_2 Z_0 \gamma c_1^{i_1} \cdots \gamma c_k^{i_k} \in S_2\}.\end{array}
\end{eqnarray*}
This is a $\NPCM$ language (no flips) as a machine with $k+ k_2$ counters can be built which 
simulates $M_2$ by guessing and checking that $q_0 Z_0 \gamma c_1^{i_1} \cdots c_k^{i_k} \in S_2$, 
while pushing $Z_0 \gamma$ onto the pushdown and adding $i_j$ to each counter $j$. Then it uses the other $k$
counters to simulate $M$ from $q$ without any flips.

\begin{claim}
$L$ is bounded (resp.\ non-empty, finite) if and only if $L_1$ and $L_2$ are both bounded (resp.\ non-empty, finite).
\end{claim}
\begin{proof}
Assume $L$ is bounded. Then there exists $w_1, \ldots, w_n$ such that $L\subseteq w_1^* \cdots w_n^*$. But $L_1$ and $L_2$ are both subsets of subwords of $L$ and so they are bounded (the set of subwords of a bounded language is bounded, and any subset of a bounded language is bounded \cite{GinsburgCFLs}).

Assume $L_1$ and $L_2$ are bounded. Hence, there exists $w_1, \ldots, w_n, v_1, \ldots, v_m$ such that $L_1 \subseteq w_1^* \cdots w_n^*$
and $L_2 \subseteq v_1^* \cdots v_m^*$. It is immediate that $L \subseteq L_1 L_2$ because $L = \{xy \mid (x,y) \in X\}$. Hence, $L\subseteq w_1^* \cdots w_n^* v_1^* \cdots v_m^*$.
\qed \end{proof}

As we have an algorithm that checks if a ($0$-flip) $\NPCM$ is bounded \cite{Georg}, the proof above creates an algorithm to check if a $1$-flip $\NPDA$ is bounded. This provides an inductive algorithm that works up to an arbitrary number of flips.
\qed \end{proof}


\begin{thebibliography}{10}
\providecommand{\url}[1]{\texttt{#1}}
\providecommand{\urlprefix}{URL }

\bibitem{Aho}
Aho, A.V.: Indexed grammars --- an extension of context-free grammars. Journal
  of the ACM  15,  647--671 (1968)

\bibitem{Baker1974}
Baker, B.S., Book, R.V.: Reversal-bounded multipushdown machines. Journal of
  Computer and System Sciences  8(3),  315--332 (1974)

\bibitem{Georg}
Baumann, P., D'Alessandro, F., Ganardi, M., Ibarra, O.H., McQuillan, I.,
  Sch\"utze, L., Zetzsche, G.: Unboundedness problems for machines with
  reversal-bounded counters. In: 25th International Conference on Foundations
  of Software Science and Computation Structures (FoSSaCS) (2023)

\bibitem{boundedVASS}
Czerwinski, W., Hofman, P., Zetzsche, G.: Unboundedness problems for languages
  of vector addition systems. In: Chatzigiannakis, I., Kaklamanis, C., Marx,
  D., Sannella, D. (eds.) 45th International Colloquium on Automata, Languages,
  and Programming (ICALP 2018). Leibniz International Proceedings in
  Informatics (LIPIcs), vol. 107, p. 119. Schloss Dagstuhl--Leibniz-Zentrum
  fuer Informatik, Dagstuhl, Germany (2018)

\bibitem{EngelfrietCheckingStack}
Engelfriet, J.: The power of two-way deterministic checking stack automata.
  Information and Computation  80(2),  114--120 (1989)

\bibitem{IteratedStack}
Engelfriet, J.: Iterated stack automata and complexity classes. Information and
  Computation  95(1),  21--75 (1991)

\bibitem{EngelfrietCopying}
Engelfriet, J., Skyum, S.: Copying theorems. Information Processing Letters
  4(6),  157--161 (1976)

\bibitem{ginspan}
Ginsburg, S., Spanier, E.: Bounded algol-like languages. Transactions of the
  American Mathematical Society  113(2),  333--368 (1964)

\bibitem{GinsburgCFLs}
Ginsburg, S.: The Mathematical Theory of Context-Free Languages. McGraw-Hill,
  Inc., New York, NY, USA (1966)

\bibitem{G78}
Greibach, S.: Remarks on blind and partially blind one-way multicounter
  machines. Theoretical Computer Science  7,  311--324 (1978)

\bibitem{GreibachCFStore}
Greibach, S.: A note on pushdown store automata and regular systems.
  Proceedings of the American Mathematical Society  18,  263--268 (1967)

\bibitem{visitautomata}
Greibach, S.A.: One way finite visit automata. Theoretical Computer Science  6,
   175--221 (1978)

\bibitem{HagueLin2011}
Hague, M., Lin, A.: Model checking recursive programs with numeric data types.
  In: Gopalakrishnan, G., Qadeer, S. (eds.) Computer Aided Verification,
  Lecture Notes in Computer Science, vol. 6806, pp. 743--759. Springer Berlin
  Heidelberg (2011)

\bibitem{Harju2002278}
Harju, T., Ibarra, O., Karhumäki, J., Salomaa, A.: Some decision problems
  concerning semilinearity and commutation. Journal of Computer and System
  Sciences  65(2),  278--294 (2002)

\bibitem{multitapeNPDA}
Harrison, M.A., Ibarra, O.H.: Multi-tape and multi-head pushdown automata.
  Information and Control  13(5),  433--470 (1968)

\bibitem{flipPushdown2}
Holzer, M., Kutrib, M.: Flip-pushdown automata: Nondeterminism is better than
  determinism. In: Baeten, J.C.M., Lenstra, J.K., Parrow, J., Woeginger, G.J.
  (eds.) Proceedings of the 30th International Conference on Automata,
  Languages and Programming (ICALP 2003), Lecture Notes in Computer Science,
  vol. 2719, pp. 361--372 (2003)

\bibitem{flipPushdown}
Holzer, M., Kutrib, M.: Flip-pushdown automata: $k + 1$ pushdown reversals are
  better than $k$. In: \'Esik, Z., F\"ul\"op, Z. (eds.) Developments in
  Language Theory, Lecture Notes in Computer Science, vol. 2710, pp. 490--501
  (2003)

\bibitem{HU}
Hopcroft, J.E., Ullman, J.D.: Introduction to Automata Theory, Languages, and
  Computation. Addison-Wesley, Reading, MA (1979)

\bibitem{StoreLanguages}
Ibarra, O., McQuillan, I.: On store languages of languages acceptors.
  Theoretical Computer Science  745,  114--132 (2018)

\bibitem{IbarraSMG}
Ibarra, O.H.: Simple matrix languages. Information and Control  17(4),
  359--394 (1970)

\bibitem{Ibarra1978}
Ibarra, O.H.: Reversal-bounded multicounter machines and their decision
  problems. J. ACM  25(1),  116--133 (1978)

\bibitem{IbarraGrammars}
Ibarra, O.H.: Grammatical characterizations of {NPDA}s and {VPDA}s with
  counters. Theoretical Computer Science  746,  136--150 (2018)

\bibitem{IbarraMcQuillanVerification}
Ibarra, O.H., McQuillan, I.: On store languages and applications. Information
  and Computation  267,  28--48 (2019)

\bibitem{Minsky}
Minsky, M.L.: Recursive unsolvability of {P}ost's problem of ``tag'' and other
  topics in theory of {T}uring {M}achines. Annals of Mathematics  74(3),  pp.
  437--455 (1961)

\bibitem{Rozoy}
Rozoy, B.: The {D}yck language ${D'}_1^*$ is not generated by any matrix
  grammar of finite index. Information and Computation  74(1),  64--89 (1987)

\end{thebibliography}
\end{document}